\newcommand{\subf}[2]{%
  {\small\begin{tabular}[t]{@{}c@{}}
  #1\\#2
  \end{tabular}}%
}
\numberwithin{equation}{section}
\newtheorem{theorem}{Definition}
\newenvironment{proof}[1][Proof]{\begin{trivlist}
\item[\hskip \labelsep {\bfseries #1}]}{\end{trivlist}}
\newcolumntype{M}[1]{>{\centering\arraybackslash}m{#1}}
\begin{document}


\author[1, 2, 3]{\small David B. McMillon}
\affil[1]{\footnotesize Department of Economics, Emory University}
\affil[2]{\footnotesize Federal Reserve Bank of Atlanta}
\affil[3]{\footnotesize Stone Center for Research on Wealth Inequality and Mobility}



\title{What Makes Systemic Discrimination, ``Systemic?" Exposing the Amplifiers of Inequity [Draft]} 
\maketitle


\thispagestyle{empty}






\abstract{

\small



Drawing on work spanning economics, public health, education, sociology, and law, I formalize theoretically what makes systemic discrimination “systemic."  Injustices do not occur in isolation, but within a complex system of interdependent factors; and their effects may amplify as a consequence. I develop a taxonomy of these amplification mechanisms, connecting them to well-understood concepts in economics that are precise, testable and policy-oriented.  This framework reveals that these amplification mechanisms can either be directly disrupted, or exploited to amplify the effects of equity-focused interventions instead.  In other words, it shows how to use the machinery of systemic discrimination against itself.  Real-world examples discussed include but are not limited to reparations for slavery and Jim Crow, vouchers or place-based neighborhood interventions, police shootings, affirmative action, and Covid-19.} 

\vspace{0.5 cm}

\noindent  \textbf{Keywords}: systemic discrimination, complexity,  racial inequality, stratification economics

\vspace{0.5 cm}

\noindent  \textbf{JEL Codes: J7, J15, J18. D63, Z130}



\small


\clearpage 
 \pagenumbering{arabic}

\doublespacing

\textit{No problem can be solved from the same level of consciousness that created it.}
\vspace{0.3cm}

-\textbf{Albert Einstein}

\vspace{0.3cm}

\section{Introduction}




Amidst rising public discourse on the persistence, prevalence, and causes of American racial inequities rings a salient and polarizing phrase: ``systemic discrimination."  Political, legal, and academic conversations surrounding ``systemic discrimination" are burdened by misunderstanding and controversy, as with many definitionally amorphous terms.  This lack of precision has impeded the kind of scientific progress that can lead to coordinated, calculated, and sustained reductions in otherwise large and remarkably persistent inequities in well-being (Bailey, Z.D., et al., 2017; Darity Jr, W.A. et al., 2022; Derenoncourt, E. et al., 2022; Hamilton, D. et al., 2010; Hoover et al., 2021; Miller, M.C., 2020; Roithmayr, D., 2014).

Social scientists' contributions to the conceptualization of systemic discrimination span multiple fields.  In law, for example, Powell (2007) discusses the interactions of institutions that, intentionally or not, reinforce racialized outcomes.  Historians like Feagin (2013) have viewed systemic discrimination as a system of intentional exploitation: “The complex array of antiblack practices, the unjustly-gained political-economic power of whites, the continuing economic and other resource inequalities along racial lines, and the white racist attitudes created to maintain and rationalize white privilege and power.”   In contrast, sociologists such as Reskin (2012) have emphasized a self-reinforcing system of race-linked disparities that can propagate across sectors regardless of intent.  This has been argued in the public health literature as well, which asserts that the way people make meaning out of these disparities can generate discrimination endogenously.  That is, ``these patterns and practices in turn reinforce discriminatory beliefs, values, and distribution of resources" (Bailey et al., 2017).

The field of economics has lagged behind public health, sociology, and law in proposing theories of systemic discrimination.  Traditionally, economists have focused on taste-based discrimination (in which individuals or firms derive utility from engaging in a discriminatory injustice), statistical discrimination (accurate or inaccurate, group-based assumptions about individuals of a minority group due to a lack of information), and alternative explanations for inequality besides discriminatory injustices, (Charles, K.K., 2011) including self-fulfilling prophecies, social signification, and racial stigma (Loury, G.C., 2009).  The field has also seen extensive work towards economic equity in labor, education, criminal justice, and other particular sectors (Jones, D. et al., 2022; Miller, C., 2017; Johnson, R.C. et al., 2019; Ba, B., 2021; Cunningham, J. P. et al., 2021; Cook, L.D., 2020).  This work is consistent with the approach of stratification economics, which focuses on how racial disparities in material conditions are caused by present-day and historical intergroup injustices (Darity Jr, W. A. 2022).  More recently, economists have finally begun considering systemic discrimination: ``[Indirect] discrimination emerging from group-based differences in non-group characteristics" (Bohren et al., 2022).  This definition is precise and econometrically tractable, but much narrower than the notions considered in other social sciences.  Moreover, while this definition has been taken up (e.g., Zivin et al., 2023), ``systemic discrimination" is still being used more loosely in the field of economics (e.g., as a practice that leaves a broad impact; Kline et al., 2022).  


Despite disagreement on the definition of systemic discrimination across the social sciences, the phrase is generally intended to help explain the prevalence and persistence of inequities.  Perhaps the key to harnessing scientific thought to fight those inequities lies in understanding what exactly makes systemic discrimination ``systemic."  One theme that is consistent across the above definitions is that initial effects of an injustice are allowed to amplify in some way.

Specifically, as discussed in the subfield of complexity economics (Arthur, W.B.; Durlauf, S. N., 2012), the social systems of interest can exhibit features of what are called complex systems (Reskin, 2012; McMillon et al., 2014; Roithmayr, 2014).  The effects of an injustice can amplify contemporaneously (spillover across outcomes or sectors, social multipliers, or synergies between inequities and shocks), or temporally (persistence through reinforcement, e.g., across generations), regardless of intent, because such scaling is a mathematical consequence of the interdependent nature of complex systems.  This view includes but is not limited to the definition proposed by Bohren and coauthors (Bohren et al., 2022), and addresses major concerns emphasized in the economics of discrimination (Lang, K. et al., 2020; Charles, K.K., 2011). This paper is the first to formalize systemic discrimination as a phenomenon in which the initial inequities from an injustice are amplified by a system's interdependencies.  In no way does this perspective absolve discriminatory behavior of responsibility--especially since the initial inequities still stem from injustices, and since amplification mechanisms can be introduced intentionally.  It does something much more important: it points us to the kinds of technical tools we will need to implement policy solutions that yield large persistent effects on racial equity.  


This view of systemic discrimination produces a key insight: that the very features that amplify inequities in the status quo can be harnessed to amplify the effects of equity-focused interventions as well.  In other words, it shows how to use the machinery of systemic discrimination against itself.  When faced with an inequity and a mechanism that amplifies it, we can either ``disrupt" the amplification mechanism directly, or ``exploit" it, amplifying the effects of interventions that rectify the initial inequity like a booster engine.  This paper develops the first formalized taxonomy of such amplification mechanisms and connects them to well-understood concepts in economics.  The result is a practical, precise, and unified framework through which social scientists can not only identify and measure, but more strategically and efficiently combat systemic discrimination.  This provides a clear comparative advantage over the generic view of systemic discrimination as a widespread practice leaving a broad impact.

The remainder of this paper proceeds as follows.  Section $2$ presents a stochastic model of systemic discrimination.  Section $3$ presents a taxonomy of amplification mechanisms embedded within the model in Section $2$, along with policy examples.  Most proofs are in the Appendix.  Section $4$ discusses limitations, implications for measurement, testable predictions, and policy.  The final section concludes.

\section{Model}

 The contributions of this paper regard systemic discrimination in general, including, for example, systemic gender discrimination.  However, I will focus primarily on systemic racial discrimination for substantive context.   The contributions of this paper are also independent of what is considered discriminatory or unjust, which will be intentionally left up to the reader.  Importantly, a racial \textbf{inequity} will be defined as a racial inequality stemming from an injustice.  This framing forces the reader to focus on racial differences that are unjust by her own admission, regardless of the extent to which she believes some racial differences are permissible, justifiable, statistical normalities or even natural.  It ensures a productive conversation about systemic discrimination, a politically polarizing topic, by focusing on the amplification of what the reader, regardless of her political perspective on race, necessarily acknowledges as immoral.  
 
 This definition of `inequity' is closely related to definitions proposed in the public health literature, such as by Whitehead and coauthors (2006).  However, the ``injustice" in this paper's definition is not necessarily due to racial inequalities that stem from group differences in the distribution of resources.  Instead, it derives from specific discriminatory acts, whether at the individual level or the governmental level.  These acts may well induce group differences in the distribution of resources, but the injustice lies in how those group differences arose, not in their mere existence.  This paper will remain agnostic about whether general inequality is inherently unjust.  Its focus is on how the initial effects of unjust treatment---inequities---amplify in various ways.  Unlike inequalities, inequities are, in this paper, unjust by definition.  The focus of this model regarding systemic discrimination is not inequality in general, but inequity in particular.  Methods and theoretical ramifications for distinguishing between inequalities that we find permissible or not, and when it is infeasible to do so, is discussed elsewhere (Jackson, J.W., 2021).  We will proceed under the assumption that we have identified inequities in particular.  

For convenience, I will measure ``inequities" based on the unjust, standardized distance of a disadvantaged person's outcome from the advantaged group's mean.  The purpose for not simply comparing group means is that the model needs to keep track of individuals, not only to illustrate how systemic discrimination can occur at the individual level, but also to illustrate how interactions between individuals lead to social multipliers.  ``Disadvantaged" in this context refers to the group against whom the injustice was carried out.  ``Advantaged" refers to a group against whom the injustice was not carried out.  Notice it is not about which group carried out an injustice.  This allows, for example, the consideration of educational inequities faced by descendants of slaves relative to Asian Americans.  

Finally, this model focuses only on temporary injustices, or injustices that can continually arise endogenously from the effects of previous temporary injustices.  It does not account for permanent, exogeneous injustices.  For example, if there is some ongoing, permanent, ``natural" level of racial animus that continues to impact Black Americans, it will not be addressed in this paper.  However, the model could easily be extended to consider this with the inclusion of a baseline nonzero constant level of inequity.  I omit this nuance in the interest of simplicity and clarity of focus, as the notion of systemic discrimination as widespread, embedded and permanent has already been studied.  Moreover, the importance of the arguments of this paper is only enhanced by the presence of such forces. 

To formalize the amplification mechanisms that make systemic discrimination, ``systemic," we must understand what the world would look like in the absence of amplification.  As a working example, suppose no further injustices occurred after emancipation in 1865.  Consider a world in which there were no mechanisms (including compound interest) that amplified racial wealth inequities (contemporaneously or temporally) thereafter.  How would racial wealth inequities have evolved?  We still wouldn't expect them to disappear completely the day the enslaved were freed, but we would expect them to decay relatively quickly, in some ``natural way."  Similarly, there would be other inequities (e.g., in education and health) in 1865 that would decay naturally over time.  Let us proceed by formalizing these notions.



What happens $t$ periods following an injustice?  Assume $t \geq 0$ and let $X(t) \in R^{NxM}$ be a random $N$ X $M$ matrix of normalized inequities for a minority group, such that entry $x_{ij}(t)$ represents the normalized value of inequity $j$ for person $i$ at time $t$, where $j=1,...,M$ and $i=1,...,N$.  Again, the inequity values indicate the standardized distance of person $i$ from the advantaged group's mean.  In that case, $x_{ij}=2$ means that person $i$ is two standard deviations below the White mean value of inequity $j$.  Finally, for the sake of consistency, assume all inequities are measured ``optimistically."  For example, inequities in heart disease should be measured as inequities in a ``lack" of heart disease, and inequities in crime should be measured as inequities in law-abiding behavior. \vspace{0.3cm}

 Assume $X(0)$ is given, and that for all $i=1,...N$  and $j=1,...M$, 
 $$x_{ij}(t)=\delta x_{ij}(t-1)+\epsilon_{ij}(t)$$
 $$\Rightarrow x_{ij}(t)=\delta^{t} x_{ij}(0)+ \sum_{k=1}^t \delta^{t-k} \epsilon_{ij}(k)$$

 for some idiosyncratic shock $\epsilon_{ij}(t)$ with a mean of zero and finite variance, and one-step decay factor $\delta \in (0,1)$.  Hence $(X(t))_{t \geq 0}$ constitutes a random process.  The decay factor should be conceptualized in a similar fashion as in macroeconomic models in which physical capital undergoes a natural decay over time.  It can be shown that if $E[\epsilon_{ij}(t)]=0$, the random process $(X(t))_{t \geq 0}$ should naturally lead to ``asymptotic equity" over time, in the sense that 

$$\lim_{t \rightarrow \infty} E[x_{ij}(t)]=0$$

for all $i=1,...N$ and $j=1,...M.$  That is because it is a Markov process with a decay factor $\delta \in (0,1)$.  Call $(X(t))_{t \geq 0}$ the \textbf{``asymptotically equitable"}  random process with decay factor $\delta$.

This asymptotically equitable random process describes how the world evolves following an injustice in the absence of amplification-and, importantly, in the absence of any other exogenously imposed injustices.  This model asserts that if there is no convergence to equity, it is necessarily due to some mechanism which can be captured as ``systemic" that belongs in the construction of $S$.  Second, convergence to equity in the absence of ``systemic discrimination" does not rule out racial inequality in general.  Again, from the specified definitions, we can reach equity (so that the effects of injustices have faded out) without equality, to the extent the remaining intergroup inequality is completely unrelated to injustices.

I will now introduce the ``system" $S$ that acts on this process.  Note that both of the following definitions indicate that $S$ is implicitly causal: $S$ is discriminatory if it causes initial inequities to amplify in some way.  Since there are many possible ways to produce a summary measure of inequities across individuals, I will use a generalizing, abstract matrix norm $\mu$.

\begin{theorem}
\textbf{Discriminatory in the long run}: A function

$$S: (X(t))_{t \geq 0} \rightarrow (Y(t))_{t \geq 0}$$

acting on an ``asymptotically equitable" random process $(X(t))_{t \geq 0}$ to generate the random process $(Y(t))_{t \geq 0}$ is \textbf{discriminatory in the long run} with respect to a matrix norm $\mu$ iff 

$$\lim_{t \rightarrow \infty} E[\mu(X(t))]<\lim_{t \rightarrow \infty} E[\mu(Y(t))]$$

\end{theorem}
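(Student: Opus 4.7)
The statement, although placed in a \texttt{theorem} environment, is actually a definition (the preamble declares \texttt{\textbackslash newtheorem\{theorem\}\{Definition\}}), so strictly speaking there is nothing to prove. What one \emph{should} verify is that the definition is well-posed---that both sides of the defining inequality make sense---and that it is nonvacuous. Below I sketch how I would carry out those verifications.

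\textbf{Well-posedness of the limit.} For the asymptotically equitable process, the closed form $x_{ij}(t)=\delta^{t} x_{ij}(0)+\sum_{k=1}^{t}\delta^{t-k}\epsilon_{ij}(k)$ with $\delta\in(0,1)$ and $\mathrm{Var}(\epsilon_{ij}(k))$ finite implies, by a standard geometric-series argument, that each entry converges in $L^{2}$ to a mean-zero random variable with finite variance $\sigma_{ij}^{2}/(1-\delta^{2})$. Since all matrix norms on $\mathbb{R}^{N\times M}$ are equivalent and continuous in the entries, and $E[\mu(X(t))]$ is bounded by a constant multiple of $\sum_{i,j}E|x_{ij}(t)|$, the sequence $E[\mu(X(t))]$ has a finite limit. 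For $E[\mu(Y(t))]$ one must allow the value $+\infty$ in the extended reals, so the inequality is well-defined even for systems $S$ that blow up the norm over time.

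\textbf{Sanity checks.} The identity map $S(X)=X$ should \emph{not} be discriminatory in the long run, and indeed both limits coincide. A nontrivial example is given by a system that replaces the one-step decay factor $\delta$ with some $\delta'\in(\delta,1)$; the resulting process is still well-defined but has a larger stationary variance, so for any monotone entrywise norm $\mu$ the right-hand limit strictly exceeds the left. This confirms that the class of ``discriminatory in the long run'' systems is neither empty nor trivially universal.

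\textbf{Main obstacle.} The subtle point is the choice of norm. For a generic matrix norm---for example, the spectral norm applied to a signed matrix---$E[\mu(X(t))]$ need not be monotone in the magnitudes of the entries, so the defining inequality may fail to capture the intuitive notion of amplified inequity that the author has in mind. In a careful treatment I would restrict $\mu$ to absolute, monotone norms (those satisfying $\mu(A)\le\mu(B)$ whenever $|a_{ij}|\le|b_{ij}|$ entrywise), which guarantees that larger idiosyncratic deviations translate into larger expected norm values and thereby aligns the formal definition with its motivating intuition.
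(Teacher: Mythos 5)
You are right that this is a definition, not a theorem: the preamble maps the \texttt{theorem} environment to the heading ``Definition,'' and the paper accordingly supplies no proof of this statement anywhere (the Appendix only proves the later propositions that invoke it). So there is no paper proof to compare against, and your decision to instead check well-posedness and nonvacuousness is the sensible move; it in fact goes beyond what the paper does, since the paper only establishes the entrywise statement $\lim_{t\to\infty}E[x_{ij}(t)]=0$ and never verifies that $\lim_{t\to\infty}E[\mu(X(t))]$ exists (indeed, in the Appendix the propositions are proved by comparing entrywise expectations $E[y_{ij}(t)]$ versus $E[x_{ij}(t)]$ rather than the norms appearing in the definition, a slippage your monotone-norm caveat correctly flags). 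One technical quibble with your well-posedness argument: the partial sums $\sum_{k=1}^{t}\delta^{t-k}\epsilon_{ij}(k)$ are not $L^{2}$-Cauchy (the increment $x_{ij}(t+1)-x_{ij}(t)=(\delta-1)x_{ij}(t)+\epsilon_{ij}(t+1)$ has variance bounded away from zero), so the process converges in distribution to the stationary law with variance $\sigma_{ij}^{2}/(1-\delta^{2})$ rather than in $L^{2}$ to a limiting random variable; your conclusion that $E[\mu(X(t))]$ has a finite limit still holds, via weak convergence plus uniform integrability of $\mu(X(t))$, but the stated mode of convergence should be corrected. Your observation that the left-hand limit is generically positive (not zero) under a norm, and that the right-hand side must be allowed to equal $+\infty$ to accommodate the divergent linear cases the paper later exhibits, are both accurate and worth having on record.
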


That is, if $S$ generates inequity in the long run.  For example, if agglomeration externalities are sufficiently strong in a spatial economy, discriminatory shocks could have permanent, ``path dependent" consequences (Allen, T., et al., 2020).  However, inequities can be amplified temporarily as well, regardless of what happens in the steady state.  This brings us to the next definition:

\begin{theorem}
 \textbf{Discriminatory on an Interval}: A function

$$S: (X(t))_{t \geq 0} \rightarrow (Y(t))_{t \geq 0}$$

acting on an``asymptotically equitable" random process $(X(t))_{t \geq 0}$ is \textbf{discriminatory on the interval} $B$ with respect to a matrix norm $\mu$ iff 

$$ E[\mu(X(t))] < E[\mu(Y(t))]$$

for all $t \in B$. 

\end{theorem}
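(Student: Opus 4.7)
The final statement is a definition (the \texttt{theorem} environment in this paper is declared to typeset as ``Definition''), so the ``iff'' is stipulative and strictly speaking there is nothing to prove. My proof proposal therefore amounts to a plan for verifying that the definition is well-posed and for locating its logical content relative to Definition 1.

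First I would check that $E[\mu(X(t))]$ and $E[\mu(Y(t))]$ are finite so that the inequality is meaningful. From the explicit recursion $x_{ij}(t)=\delta^{t} x_{ij}(0)+\sum_{k=1}^t \delta^{t-k}\epsilon_{ij}(k)$ with mean-zero, finite-variance shocks, each entry of $X(t)$ lies in $L^2$. Since $\mu$ is a matrix norm on a finite-dimensional space, it is equivalent to the Frobenius norm, so $\mu(X(t))\in L^1$ and its expectation is well-defined and finite for every $t$. For the image process $Y(t)$ one needs $S$ to preserve enough measurability and integrability to guarantee $\mu(Y(t))\in L^1$; as stated the definition is silent on this, so I would record it as an implicit standing hypothesis on the admissible class of operators $S$.

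Second I would compare Definitions 1 and 2 to confirm no hidden inconsistency has been introduced. Discrimination on a tail interval $B=[T,\infty)$ gives $E[\mu(X(t))]<E[\mu(Y(t))]$ uniformly for $t\geq T$, which implies only the weak limit inequality $\liminf_t E[\mu(X(t))]\leq \liminf_t E[\mu(Y(t))]$; obtaining Definition 1's \emph{strict} inequality between limits requires an extra regularity assumption (e.g.\ existence of the limits together with a uniform gap). Conversely, Definition 1 does not entail Definition 2 on any specified interval. The only real obstacle in the plan is pinning down the admissibility class of $S$: pathological choices could make $E[\mu(Y(t))]=\infty$ and render the defining inequality vacuous, so the constructions of $S$ in Section 3 must supply enough structure (linearity, bounded feedback, or Lipschitz dependence on $X$) to rule such degeneracies out.
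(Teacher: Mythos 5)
You are right that this is a definition, not a theorem: the paper's \texttt{theorem} environment is declared to print as ``Definition,'' and the paper accordingly offers no proof of this statement. Your recognition of that fact, together with your sensible side remarks on integrability of $\mu(Y(t))$ and on the logical independence of Definitions 1 and 2, is consistent with the paper's treatment, so there is nothing further to reconcile.
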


That is, if $S$ generates inequity during time interval $B$.\\

It is possible that a system can be discriminatory on an interval and not be discriminatory in the long run.  This is the case with bottlenecks, whereas although racial equity may be achieved in the steady-state, the time to converge to the equitable steady state could be extremely long.  For example, Derenoncourt and coauthors (Derenoncourt et al., 2022) show that even under equal, optimistic wealth-generating conditions, the racial wealth gap will take more than $200$ years to close.  This is a justification for reparations, and a similar justification has been proposed for affirmative action (Bagenstos, S. R., 2014), despite the recent supreme court ruling.  Bottlenecks matter regardless of the long-term result.

The function $S$ embeds a non-exhaustive ``taxonomy" of amplification mechanisms, described in the next section.  These mechanisms, I argue, are examples of what makes systemic discrimination ``systemic."  They exist precisely because injustices operate on interconnected systems.











\section{What is $S$? A Taxonomy}

\textit{“All things appear and disappear because of the concurrence of causes and conditions. Nothing ever exists entirely alone; everything is in relation to everything else.”}
\vspace{0.3cm}

-\textbf{The Buddha}\\

 $S$ describes how the interconnectedness of social, economic, or even biological systems amplify the initial effects of injustices.  It is an abstract function that acts on an asymptotically equitable random process $(X(t))_{t \geq 0}$, and transforms it into a random process $(Y(t))_{t \geq 0}$ that is either asymptotically inequitable, temporarily inequitable, or both.   $S$ can embed a myriad of well-known economic models that contain features that amplify inequities.  For instance, $S$ can represent poverty traps due to dynamic reinforcement between inequities (Durlauf, S. et al., 2017).  $S$ can also embed a system describing the dynamics of intergenerational mobility (Chetty, R. et al., 2020), the dynamics of a spatial economy  (Allen, T. et al., 2020), or the reinforcing dynamics of academic skill formation (Cunha, F. et al., 2007; McMillon, 2024).  But $S$ is also general enough to represent well-known phenomena across all sciences, including but not limited to the biological mechanisms that make those with heart disease susceptible to Covid-19 (Wadhera, R. K. et al., 2021), the psychological mechanisms behind stereotype threat (Spencer, S. J. et al., 2016), the sociological mechanisms through which police officers' use of force depends on peers (Roithmayr, D. et al., 2016), and even the political economy endogenizing the response of a dominant group to equity-focused interventions (Derenoncourt, E., 2022).

I will now describe four major ways that $S$ can amplify inequities.  Amplification mechanisms are the ``engines" of systemic discrimination.  In the interest of equity, we can either disrupt the engines themselves, or exploit them as engines for equity, as illustrated in the next section.

\subsection{Intersectoral Spillover}

 $S$ can induce $\textbf{Intersectoral Spillover}$, so that one inequity spills over into another.  Suppose that, due to past injustices, Black Americans are more likely to have a criminal history.  Then, even without direct discrimination in the labor market, those inequities will spill over into the labor market (Agan, A., and Starr, 2018), because employers are less likely to hire people with a criminal history.   In this case the amplification is about the propagation of inequities across different outcomes or sectors of the economy.  This propagation temporarily amplifies the initial effects of an injustice.  More precisely:

\begin{theorem}

Consider the abstract function $S: (X(t))_{t \geq 0} \rightarrow (Y(t))_{t \geq 0}$ acting on an asymptotically equitable random process $(X(t))_{t \geq 0}$ to generate random process $(Y(t))_{t \geq 0}$.  Let $\delta \in (0,1)$ be the discount factor that defines $(X(t))_{t \geq 0}$.\\

 \textbf{Intersectoral Spillover} occurs at time $t$ if for some $s<t$ and inequity $j$, there exists $b_k > 0$, $k \neq j$, s.t.
$$y_{ij}(t)=\delta^{t-s} y_{ij}(s)+b_k y_{ik}(s)+\epsilon_{ij}(t)$$

That is, inequity $j$ is modified by some other inequity $k \neq j$.\\

\end{theorem}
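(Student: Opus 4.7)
The final display in the excerpt is typeset via the \texttt{theorem} environment, but because the preamble of this paper aliases \texttt{theorem} to print as \emph{Definition}, the item is in fact a definition --- that of Intersectoral Spillover. There is consequently no proposition to prove: the text simply stipulates a functional form for the update rule of $(Y(t))_{t \geq 0}$ and labels any $S$ inducing that form as exhibiting intersectoral spillover. The appropriate ``proof proposal'' is therefore to record the definitional status of the statement and to sketch the sanity checks that justify putting the definition on the page, rather than to attempt a derivation from prior axioms.

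First, well-posedness. For fixed $s < t$, the right-hand side $\delta^{t-s} y_{ij}(s) + b_k y_{ik}(s) + \epsilon_{ij}(t)$ involves only $Y$-coordinates at the earlier time $s$, a deterministic decay factor inherited from the underlying asymptotically equitable process, a deterministic loading $b_k > 0$, and a fresh mean-zero finite-variance shock. Under the integrability already posited in Section~2 for the shocks and for $X(0)$, linearity gives that $y_{ij}(t)$ is an honest random variable with finite first and second moments, so the definition is internally coherent and genuinely defines a stochastic process.

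Second, non-vacuousness. I would exhibit a minimal two-coordinate $S$ that realizes the recursion, for example the pair $y_{i1}(t) = \delta y_{i1}(t-1) + b \, y_{i2}(t-1) + \epsilon_{i1}(t)$ and $y_{i2}(t) = \delta y_{i2}(t-1) + \epsilon_{i2}(t)$, launched from $Y(0) = X(0)$ with the same shock family as drives $(X(t))$. This instantiates the ``criminal history spills into the labor market'' narrative in closed form and witnesses that the class of $S$ satisfying the definition is nonempty; it also gives the reader a concrete object on which to test the taxonomy in later sections.

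The main obstacle --- and the natural next lemma the reader anticipates --- is \emph{not} logically part of the definition and would require its own separate proof: namely, that any $S$ obeying this recursion is actually discriminatory on an interval in the sense of the earlier definition whenever $E[y_{ik}(s)] \neq 0$. I flag this as the companion result worth stating immediately after the definition, but, consistent with the observation that the present item is a definition, I do not embed a proof of it here. A definition has no logical content to be established; what it has is a burden of being well-posed, non-vacuous, and faithful to the motivating example, and those are the three checks above.
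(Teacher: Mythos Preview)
Your reading is correct: the item is a definition (the preamble aliases \texttt{theorem} to display as ``Definition''), and the paper accordingly offers no proof of it, reserving the substantive claims for the separately stated propositions that intersectoral spillover is discriminatory on an interval but not in the long run. Your sanity checks on well-posedness and non-vacuousness are reasonable glosses but go beyond what the paper itself supplies at this point; the paper simply states the definition and moves on.
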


In this definition, one inequity $y_{ij}(t)$ is being influenced by the value of another inequity $y_{ik}(s)$ from an earlier time step $s<t$.  It is also influenced by its earlier value, discounted.  This discounting would also have happened in the asymptotically equitable process $(X(t))_{t \geq 0}$, but in the amplified process $(Y(t))_{t \geq 0}$, there is also spillover across inequities.  As a result, there is, at least temporarily, more expected inequity at time t in the amplified process than there would have been in $(X(t))_{t \geq 0}$.




\begin{restatable}{proposition}{Spillover}

Intersectoral Spillover is discriminatory over an interval.

\end{restatable}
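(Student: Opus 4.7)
The plan is to show that the strict wedge the spillover term introduces in the expected $(i,j)$-entry of $Y(t)$ propagates into a strict wedge in $E[\mu(Y(t))] - E[\mu(X(t))]$ for a suitably chosen matrix norm $\mu$. Starting from the spillover recursion
$$y_{ij}(t) = \delta^{t-s} y_{ij}(s) + b_k y_{ik}(s) + \epsilon_{ij}(t)$$
and matching $(Y(\tau))_{\tau \leq s}$ in distribution with $(X(\tau))_{\tau \leq s}$, taking expectations kills the zero-mean shock and yields
$$E[y_{ij}(t)] - E[x_{ij}(t)] = b_k\, E[x_{ik}(s)] = b_k\, \delta^{s}\, E[x_{ik}(0)],$$
while all other expected entries of $Y(t)$ agree with those of $X(t)$. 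Under the paper's optimistic-sign convention, $E[x_{ik}(0)] > 0$ (the disadvantaged group starts with strictly positive inequity in sector $k$), so this wedge is strictly positive.

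Next, I would choose an entrywise-monotone matrix norm---say $\mu(A) = \sum_{i,l} |a_{il}|$---and lift the positive wedge in the expected entry to a positive wedge in $E[\mu(\cdot)]$. The clean route is to couple $Y$ and $X$ on a common probability space so that, at time $t$, $Y(t) = X(t) + C(t)$ pathwise, with $C(t)$ supported only on entry $(i,j)$ and equal there to $b_k x_{ik}(s)$. Under the mild assumption that initial inequities dominate idiosyncratic noise so the affected coordinates remain non-negative almost surely, one obtains $\mu(Y(t)) = \mu(X(t)) + b_k x_{ik}(s)$ pathwise, whence $E[\mu(Y(t))] - E[\mu(X(t))] = b_k \delta^{s} E[x_{ik}(0)] > 0$. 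Because the natural dynamics after $t$ carry this positive mean forward under geometric decay at rate $\delta$, the inequality in fact holds on an entire interval $B = [t, t']$---shrinking but strictly positive---not merely at the singleton $\{t\}$.

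The step I expect to be the main obstacle is precisely this last move from expected entries to the matrix norm: general matrix norms are not entrywise monotone, so for $X$ with mixed-sign coordinates one can have $\mu(X+C) < \mu(X)$ even when $C$ is entrywise non-negative. Restricting attention to a monotone norm, which the proposition's phrasing ``with respect to a matrix norm $\mu$'' appears to license, and checking the sign-preservation condition in the natural regimes (e.g., bounded shocks, or initial inequities that are large relative to the shock variance) resolves the issue cleanly; an alternative route, working with $E[\mu(\cdot)^{2}]$ in the Frobenius norm and bounding $E[\mu]$ via Cauchy--Schwarz, would avoid the sign assumption but introduces its own technicalities.
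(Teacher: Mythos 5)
Your proposal is correct and follows essentially the same route as the paper: both arguments isolate the single affected entry, note that the spillover term $b_k y_{ik}(s)$ is strictly positive given a pre-existing inequity in sector $k$, and conclude $E[y_{ij}(t)] > E[x_{ij}(t)]$ by comparison with the unamplified recursion. The one substantive difference is that you explicitly confront the passage from an entrywise expectation gap to a gap in $E[\mu(\cdot)]$ (via a monotone norm and a pathwise coupling), a step the paper's own proof silently elides by jumping from $E[y_{ij}(t)]>E[x_{ij}(t)]$ straight to ``discriminatory over the interval $(s,t)$''; your treatment is the more careful one, and your observation that general matrix norms are not entrywise monotone identifies a real (if minor) gap in the paper's argument rather than in yours.
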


(Proof in Appendix)\\

What does this mean for policy?  Let $y_{ik}(s)$ be the inequity in clean criminal history at time $s$, and $y_{ij}(t)$ be the inequity in hiring at time $t$.  With public policy, we can disrupt the amplification mechanism by reducing $b_k$.  For example, we can get employers to care less about whether applicants have a criminal history.  Note this is not the same as restricting information employers have, as in the case of ``Ban-the-Box."  

We can also exploit the amplification mechanism by reducing $y_{ik}(s)$.  That is, we can reduce the incidence of Black people being unjustly arrested or convicted.  In that case, a large $b_k$ would actually help ensure this intervention would have effects that spill over into the labor market, reducing $y_{ij}(t)$.

In the long run, however, both inequities still face decay factor $\delta$, and the expected inequities will fade over time even if one spills over into the other.  Intersectoral Spillover alone is not enough to cause the long-run expected inequities under the amplified process $(Y(t))_{t \geq 0}$ to exceed those under $(X(t))_{t \geq 0}$.  

\begin{restatable}{proposition}{Spilloverlong}

Intersectoral Spillover alone is not discriminatory in the long run.

\end{restatable}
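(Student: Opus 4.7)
The plan is to exploit the geometric decay $\delta \in (0,1)$ in the asymptotically equitable dynamics to show that a spillover event contributes only a transient boost whose effect vanishes in the long run. The main tool will be a coupling argument: I will construct $(X(t))_{t \geq 0}$ and $(Y(t))_{t \geq 0}$ on a common probability space with the same initial condition $X(0)=Y(0)$ and the same idiosyncratic shocks $\{\epsilon_{ij}(t)\}$, so that $X$ and $Y$ agree at every time step outside the spillover event.

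First, let $t^{*}$ denote the time at which Intersectoral Spillover occurs, with $s < t^{*}$ as in the definition. Under the coupling, for every $t \leq s$ we have $Y(t)=X(t)$ identically. At the spillover time $t^{*}$, combining the spillover equation with the asymptotically equitable evolution of $X$ yields a difference matrix $D(t^{*}) := Y(t^{*})-X(t^{*})$ whose entries are linear combinations of $x_{ik}(s)$ and finitely many shocks; in particular $E[\mu(D(t^{*}))] < \infty$. Next, for each $t > t^{*}$, since both processes follow the common recursion $z_{ij}(t) = \delta z_{ij}(t-1)+\epsilon_{ij}(t)$ with identical noise, subtraction gives $D(t) = \delta^{t-t^{*}} D(t^{*})$. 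Homogeneity of the norm then yields $\mu(D(t)) = \delta^{t-t^{*}} \mu(D(t^{*})) \to 0$ almost surely and in $L^{1}$.

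I would then apply the reverse triangle inequality for the matrix norm, $|\mu(Y(t))-\mu(X(t))| \leq \mu(Y(t)-X(t)) = \mu(D(t))$, take expectations, and pass to the limit to conclude
\[
\lim_{t\to\infty} E[\mu(Y(t))] \;=\; \lim_{t\to\infty} E[\mu(X(t))].
\]
Because the definition of ``discriminatory in the long run'' demands a strict inequality, this equality is enough to refute it, establishing the proposition.

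The main obstacle is definitional rather than technical: the statement of Intersectoral Spillover specifies only the form of $y_{ij}(t^{*})$ and is silent on dynamics at other times. I will take the qualifier ``alone'' to mean that, in the absence of any other amplifier, $(Y(t))$ evolves according to the asymptotically equitable recursion at every time step not identified as a spillover event, so that at most finitely many events distinguish $Y$ from $X$ and the same geometric-decay argument applies to their cumulative difference. If instead one permitted spillover to recur at every period with arbitrary $b_{k}$, stability would require an additional condition such as $\delta + b_{k} < 1$ on the resulting VAR; that caveat should be flagged in the write-up and, if necessary, folded into the hypothesis of the proposition.
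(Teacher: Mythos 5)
Your argument is correct for the reading you adopt, but it takes a genuinely different route from the paper, and your closing caveat misjudges where the paper actually lives. The paper does \emph{not} treat spillover as a one-off event: it writes the spillover as recurring every period, $E[y_{ij}(t)] = \delta E[y_{ij}(t-1)] + b_k E[y_{ik}(t-1)]$ with $E[y_{ik}(t)] = \delta E[y_{ik}(t-1)]$, and observes that the resulting system matrix is upper triangular with repeated eigenvalue $\delta$, so the equitable steady state $(0,0)$ is stable for \emph{every} $b_k>0$ --- no condition like $\delta+b_k<1$ is needed. The point you miss is that spillover ``alone'' is one-directional: the source inequity $y_{ik}$ is not itself fed by $y_{ij}$, so it decays like $\delta^t$ on its own and contributes only a $t\,b_k\,\delta^{t-1}$ term to $y_{ij}$, which still vanishes. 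The stability condition you propose ($\delta+\sqrt{bc}<1$, in effect) only becomes binding once the spillover closes into a loop, which is exactly the paper's separate ``reinforcement'' mechanism; folding such a hypothesis into this proposition would weaken it unnecessarily. Your coupling argument itself --- common shocks, $D(t)=\delta^{t-t^{*}}D(t^{*})$, reverse triangle inequality for $\mu$ --- is clean and in one respect more careful than the paper: it genuinely controls $\lvert E[\mu(Y(t))]-E[\mu(X(t))]\rvert$ rather than only comparing entrywise expectations $E[y_{ij}(t)]$ versus $E[x_{ij}(t)]$ as the paper does. If you extend the coupling to the recurring one-directional case (where $D(t)$ satisfies $D(t)=\delta D(t-1)+b_k\,Y_{\cdot k}(t-1)$ with the source column decaying geometrically), you recover the paper's conclusion without any added hypothesis.
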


(Proof in Appendix)\\

\subsection{Intersectoral Synergy}

$S$ can induce $\textbf{Intersectoral Synergy}$, so that the effect of a shock operating on one inequity is enhanced by the pre-existing presence of another inequity.  In other words, pre-existing inequities can make a group vulnerable to shocks.  For instance, consider racial wealth inequities in the U.S.  Because low wealth makes it more difficult to smooth consumption (Ganong, P. and Jones, D., 2020), Black Americans tend to be more heavily impacted by housing crises and other macroeconomic shocks (Hoover et al., 2021).  Another example is that racial inequities in heart disease increase systemic risk for covid-19 among Black Americans (“syndemics;" see Mendenhall, E., and Singer, 2018; Moore et al., J.T., 2020; and Wadhera, R. K. et al., 2021).   In each of these cases, amplification is happening because the inequities in one sector or outcome are amplifying systemic risk in another.

\begin{theorem}

Consider the abstract function $S: (X(t))_{t \geq 0} \rightarrow (Y(t))_{t \geq 0}$ acting on an asymptotically equitable random process $(X(t))_{t \geq 0}$ to generate random process $(Y(t))_{t \geq 0}$.  Let $\delta \in (0,1)$ be the discount factor that defines $(X(t))_{t \geq 0}$.\\

 \textbf{Intersectoral Synergy} occurs at time $t$ if for some $s<t$ and inequity $j$, the following are true:
 
 \begin{enumerate}
 
 \item If $\epsilon_{ij}(t)>0$, there exists $c_k>0$, $k \neq j$, s.t.
$$y_{ij}(t)=\delta^{t-s} y_{ij}(s)+c_k y_{ik}(s) \epsilon_{ij}(t)+\epsilon_{ij}(t)$$

\item If $\epsilon_{ij}(t)<0$, there exists $c_k<0$, $k \neq j$, s.t.
$$y_{ij}(t)=\delta^{t-s} y_{ij}(s)+c_k y_{ik}(s) \epsilon_{ij}(t)+\epsilon_{ij}(t)$$
 
 \end{enumerate}

\end{theorem}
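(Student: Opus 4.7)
The final statement is structurally a definition rather than a theorem or proposition: the paper's preamble declares the \texttt{theorem} environment to print ``Definition,'' so every numbered block in this excerpt in fact introduces a definition. A definition carries no proof obligation, so my plan is not to derive anything from prior results but to audit the block for internal consistency and to record what it is actually asserting, which is the only honest thing to do for a statement whose purpose is to name a concept.

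First I would parse the two displayed equations and confirm that they are notational twins of the recursion used to define Intersectoral Spillover in the preceding subsection, with the additive spillover term $b_k y_{ik}(s)$ replaced by the multiplicative synergy term $c_k y_{ik}(s)\epsilon_{ij}(t)$. Next I would verify that the sign-matched case analysis---$c_k>0$ when $\epsilon_{ij}(t)>0$, and $c_k<0$ when $\epsilon_{ij}(t)<0$---forces the product $c_k y_{ik}(s)\epsilon_{ij}(t)$ to share the sign of $y_{ik}(s)$, so that a pre-existing inequity in sector $k$ amplifies whatever shock arrives in sector $j$ rather than attenuating it. This is the substantive content the word ``synergy'' is meant to capture, and checking it is a one-line sign argument.

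The only place a proof-like obstacle could arise is in confirming non-vacuity, which I would handle by exhibiting a toy realization: take $M=2$, fix some $y_{i2}(s)>0$, and define $y_{i1}(t)$ by the stated recursion with $c_2$ chosen to match the sign of $\epsilon_{i1}(t)$, thereby showing that processes satisfying the definition exist. No inequality, limit, or expectation bound needs to be established, so there is no genuine analytic obstacle: the definition is well-posed and the main conceptual point is the sign convention above. A substantive companion result---for instance, ``Intersectoral Synergy is discriminatory on an interval,'' paralleling the Spillover proposition---would require separate hypotheses such as $y_{ik}(s)>0$ with positive probability together with a sufficiently symmetric distribution for $\epsilon_{ij}(t)$, but such a claim is not what the present statement asserts.
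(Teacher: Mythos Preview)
Your diagnosis is correct: the \texttt{theorem} environment in this paper is configured to print ``Definition,'' and the block in question is purely a definition of Intersectoral Synergy with no accompanying proof in the paper. The paper does exactly what you propose---it states the definition, follows it with a paragraph of informal explanation of the sign convention (harmful shocks are worsened, helpful shocks are dampened when $y_{ik}(s)>0$), and defers all analytic content to the separate Propositions labeled \textsc{Synergy} and \textsc{Synergylong}. One small wording note: your phrase ``amplifies whatever shock arrives'' is slightly loose, since for $\epsilon_{ij}(t)<0$ the synergy term \emph{attenuates} the helpful shock rather than amplifying it; but your sign computation is right and matches the paper's own gloss that the term $c_k y_{ik}(s)\epsilon_{ij}(t)$ is nonnegative whenever $y_{ik}(s)>0$.
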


That is, the impact of a harmful shock $\epsilon_{ij}(t)>0$ on inequity $j$ is worsened by some other pre-existing inequity $k\neq j$, and the impact of a helpful shock $\epsilon_{ij}(t)<0$ on inequity $j$ is reduced by some other pre-existing inequity $k\neq j.$  The parameter characterizing the strength of the intersectoral synergy is $c_k>0$.  The interaction $c_k y_{ik}(s) \epsilon_{ij}(t)$ illustrates that the effect of the current period's shock $\epsilon_{ij}(t)$ on inequity $y_{ij}(t)$ in the current period depends on the value of another inequity $y_{ik}(s)$ in a previous time period.  $y_{ij}(t)$ also faces discount factor $\delta$ (``natural decay"), as any initial inequity would under the ideal asymptotically equitable process $(X(t))_{t \geq 0}$.  However, $S$ has transformed $(X(t))_{t \geq 0}$ into an amplified process $(Y(t))_{t \geq 0}$ that contains intersectoral synergy.  As a result, the transformed process $(Y(t))_{t \geq 0}$ will temporarily contain more inequity than $(X(t))_{t \geq 0}$ unless $\epsilon_{ij}(t)=0$.  A harmful shock $\epsilon_{ij}(t)>0$ will induce more inequity in $(Y(t))_{t \geq 0}$ than it would have in $(X(t))_{t \geq 0}$ and a helpful shock $\epsilon_{ij}(t)<0$ will reduce inequity by less in $(Y(t))_{t \geq 0}$ than it would have in $(X(t))_{t \geq 0}$.

\begin{restatable}{proposition}{Synergy}

Intersectoral synergy is discriminatory over an interval.

\end{restatable}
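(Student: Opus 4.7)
The plan is a coupling argument that exploits the sign convention built into the definition of intersectoral synergy. I would couple $(X(t))_{t\geq 0}$ and $(Y(t))_{t\geq 0}$ so that they share initial conditions at time $s$ and the same shock realizations $\epsilon_{ij}(\cdot)$ thereafter. At the immediate successor $t=s+1$, subtracting the asymptotically equitable update for $x_{ij}$ from the synergy-modified update for $y_{ij}$ gives
\begin{equation*}
y_{ij}(s+1)-x_{ij}(s+1)=c_k\,y_{ik}(s)\,\epsilon_{ij}(s+1).
\end{equation*}
The two-case definition forces $c_k$ to agree in sign with $\epsilon_{ij}(s+1)$, so $c_k\epsilon_{ij}(s+1)\geq 0$ almost surely, with strict inequality whenever the shock is nonzero. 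Combined with the pre-existing inequity $y_{ik}(s)>0$ envisioned throughout the subsection, the right-hand side is almost surely non-negative and strictly positive on a positive-probability event.

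I would then translate this pointwise ordering into a norm comparison. Because inequities are measured optimistically, entries for the disadvantaged group concentrate on the non-negative half-line, so the coupling yields $|y_{ij}(s+1)|\geq |x_{ij}(s+1)|$ almost surely with strict inequality on a positive-probability event, while every other entry coincides. Any matrix norm $\mu$ that is monotone in entrywise magnitudes on the non-negative orthant---as holds for standard choices such as Frobenius, entrywise $\ell^p$, and operator norms applied to non-negative matrices---then gives $\mu(Y(s+1))\geq \mu(X(s+1))$ almost surely, whence $E[\mu(Y(s+1))]>E[\mu(X(s+1))]$. Because the perturbation propagates through the dynamics with decay $\delta\in(0,1)$ rather than disappearing in one step, the strict inequality persists on some right neighborhood $B$ of $s+1$, meeting the definition of discriminatory on an interval.

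The main obstacle is the abstract nature of $\mu$: the statement imposes no condition that would prevent an exotic norm from reversing the inequality. I would handle this either by recording an explicit monotonicity hypothesis on $\mu$ (implicit in the companion spillover proposition) or by specializing to the squared Frobenius norm, in which case
\begin{equation*}
E[\|Y(s+1)\|_F^2]-E[\|X(s+1)\|_F^2]=E\bigl[(c_k\,y_{ik}(s)\,\epsilon_{ij}(s+1))^2\bigr]>0
\end{equation*}
follows directly from the zero-mean, finite-variance assumption on the shocks and their independence from the history $\mathcal{F}_s$, with the cross-term vanishing automatically.
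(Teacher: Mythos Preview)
Your core argument is the paper's argument: the sign convention in the definition forces $c_k\epsilon_{ij}(t)\geq 0$, so the synergy term is a nonnegative add-on to the asymptotically equitable update, and under the standing assumption $y_{ik}(s)>0$ this yields $y_{ij}(t)\geq x_{ij}(t)$ with strict inequality on a positive-probability event. The paper packages the same observation via conditioning on $\{\epsilon_{ij}(t)\geq 0\}$ and $\{\epsilon_{ij}(t)<0\}$, derives an explicit recursion for the conditional means, and then combines them by the law of total expectation to get $E[y_{ij}(t)]>E[x_{ij}(t)]$; it never passes through a matrix norm $\mu$ at all and simply compares componentwise expectations. Your coupling formulation is cleaner and you are more scrupulous about the role of $\mu$, which the paper elides.

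One genuine slip: your Frobenius fallback is not correct as written. The cross-term does \emph{not} vanish ``automatically from zero mean and independence.'' Writing $x_{ij}(s+1)=\delta x_{ij}(s)+\epsilon_{ij}(s+1)$, the cross-term is
\[
2\,E\!\bigl[x_{ij}(s+1)\,c_k\,y_{ik}(s)\,\epsilon_{ij}(s+1)\bigr]
=2\delta\,x_{ij}(s)\,y_{ik}(s)\,E\!\bigl[c_k\epsilon_{ij}(s+1)\bigr]
+2\,y_{ik}(s)\,E\!\bigl[c_k\,\epsilon_{ij}(s+1)^2\bigr].
\]
Since $c_k$ is itself a function of the sign of $\epsilon_{ij}(s+1)$, the random variable $c_k\epsilon_{ij}(s+1)$ is nonnegative (not mean-zero), and $c_k\epsilon_{ij}(s+1)^2$ has indeterminate sign, so neither summand need vanish. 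Your primary route through an entrywise-monotone norm and nonnegativity of the entries is the one that actually works, and it is the same route the paper takes implicitly; drop the Frobenius equality or reprove it under an explicit symmetry assumption on the shocks and equal magnitudes of $c_k$ across the two cases.
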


(Proof in Appendix)\\

Intersectoral synergy has important policy implications.  Let $y_{ik}(s)>0$ be a racial inequity in wealth for person $i$ at time $s$, and $y_{ij}(t)>0$ be a racial inequity in consumption at time $t$.  Suppose a harmful shock occurs in the housing market so that $\epsilon_{ij}(t)>0$. With public policy, we can disrupt the amplification mechanism by reducing $c_k$.  For example, we could insure poor households against sharp crashes in the housing market. 

We can also exploit the amplification mechanism by reducing $y_{ik}(s)$.  That is, we can reduce the size of the racial inequity in wealth with a wealth transfer.  In that case, a large $c_k$ would ensure that the wealth transfer retains the added effect of guarding the underprivileged group from systemic risk in the housing market.  In other words, the effects of the equity-focused intervention would amplify for the same reasons the effects of past injustices amplify systemic risk in the status quo.

In the long run, however, the importance of the enhanced vulnerability of a group to shocks will fade as long as the inequity $y_{ik}(t)$ that makes them vulnerable fades ``naturally" over time.  This changes if there is some other amplification mechanism or injustice that is continually enhancing the persistence of $y_{ik}(t)$, but intersectoral synergy alone is not discriminatory in the long run.

\begin{restatable}{proposition}{Synergylong}

Intersectoral synergy alone is not discriminatory in the long run.

\end{restatable}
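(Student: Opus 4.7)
The strategy mirrors that of the Intersectoral Spillover version of the claim: I would exploit the built-in $\delta$-decay of the asymptotically equitable dynamics to show that the synergy-induced contribution to $E[y_{ij}(t)]$ fades to zero in the long run. Intuitively, synergy rescales each period's shock by an amount proportional to $y_{ik}(s)$, but because $y_{ik}$ is not itself amplified by any other mechanism (``synergy alone''), the size of that rescaling shrinks as the system relaxes.

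The steps I would carry out, in order, are: (i) take expectations on both sides of the synergy recursion and use independence of $\epsilon_{ij}(t)$ from the history at time $s$ to factor $E[c_k y_{ik}(s)\epsilon_{ij}(t)] = E[y_{ik}(s)]\cdot E[c_k \epsilon_{ij}(t)]$; (ii) denote $\gamma := E[c_k \epsilon_{ij}(t)]$, strictly positive because $c_k$ is defined to share the sign of $\epsilon_{ij}(t)$, but finite because $\epsilon$ has finite variance and $|c_k|$ is bounded; (iii) with $s = t-1$, assemble the inhomogeneous linear recursion
$$E[y_{ij}(t)] = \delta \, E[y_{ij}(t-1)] + \gamma \, E[y_{ik}(t-1)],$$
plug in $E[y_{ik}(t-1)] = \delta^{\,t-1} E[y_{ik}(0)]$ from the asymptotically equitable dynamics governing inequity $k$, iterate, and use $t\delta^{t-1}\to 0$ for $\delta\in(0,1)$ to conclude that $E[y_{ij}(t)] \to 0$; (iv) observe that since the entrywise first moments of $Y(t)$ collapse to zero exactly as they do for $X(t)$, any reasonable matrix norm $\mu$ yields $\lim_{t\to\infty} E[\mu(Y(t))] \leq \lim_{t\to\infty} E[\mu(X(t))]$, so the strict inequality in the definition of ``discriminatory in the long run'' fails.

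The main obstacle, I expect, is the sign-coupling between $c_k$ and $\epsilon_{ij}(t)$: unlike the Spillover case, the extra term here is \emph{not} mean-zero, so one must explicitly bound the positive bias $\gamma E[y_{ik}(t-1)]$. This is precisely where the hypothesis ``synergy alone'' is essential---it guarantees that $y_{ik}$ decays exponentially in $t$; if another amplification mechanism acted on $y_{ik}$, the bias could persist and the conclusion could fail. A secondary technical wrinkle is upgrading from convergence of entries of $E[Y(t)]$ to convergence of $E[\mu(Y(t))]$ for the abstract matrix norm $\mu$, which I would handle by restricting attention to entry-driven norms or by invoking uniform integrability via the finite-second-moment bound inherited from $\epsilon$ and $y_{ik}$.
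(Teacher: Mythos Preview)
Your proposal is correct and follows essentially the same approach as the paper: both arguments show that the synergy contribution to $E[y_{ij}(t)]$ is of order $t\delta^{t}$ (because $E[y_{ik}(t-1)]=\delta^{t-1}y_{ik}(0)$ under ``synergy alone'') and therefore vanishes as $t\to\infty$, leaving $\lim_{t\to\infty} E[y_{ij}(t)]=0=\lim_{t\to\infty} E[x_{ij}(t)]$. Your packaging is slightly more direct---you bundle $c_k\epsilon_{ij}(t)$ into a single positive-mean quantity $\gamma$ and take one unconditional expectation, whereas the paper conditions separately on $\{\epsilon_{ij}(t)\geq 0\}$ and $\{\epsilon_{ij}(t)<0\}$ and then recombines via the law of total expectation---but the underlying mechanism is identical.
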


(Proof in Appendix)\\

There are amplification mechanisms that can generate systemic discrimination in both the short and long run, described in the next subsections.  They are distinct from intersectoral spillover and intersectoral synergy in that they alone can endogenously sustain the conditions that allow them to function.

\subsection{Social Multipliers}

$S$ can induce $\textbf{Social Multipliers}$, so that inequities experienced by one individual percolate to other individuals within that person's social network.  Social multipliers constitute another body of “systemic features” through which inequities can amplify.  Suppose, for example that an African-American is denied hiring due to discrimination.  This impacts not only that individual, but all individuals connected to her network, who may have been relying on her for information relevant to that job or industry (Bolte et al., 2020).  In this case the amplification is about the percolation of an inequity throughout a population.

\begin{theorem}

\textbf{Social Multipliers} occur at time $t$ if for some $s<t$ and inequity $j$, there exists a collection of individuals $I$ and parameters $d_k > 0$, $k \in I$, s.t. for all $i \in I$,
 
$$y_{ij}(t)=\delta^{t-s} y_{ij}(s)+\sum_{k \neq i} d_k y_{kj}(s)+\epsilon_{ij}(t)$$

That is, if there is spillover across individuals.  
\end{theorem}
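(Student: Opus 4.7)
The final statement is a definition, not an assertion with a deductive proof obligation: it specifies a structural condition on $(Y(t))_{t \geq 0}$ under which $S$ is said to embed \textbf{Social Multipliers}. Accordingly, what I would ``establish'' is definitional hygiene --- that the condition is well-posed, nontrivial, and correctly nested inside the taxonomy alongside Intersectoral Spillover and Intersectoral Synergy. My plan is to argue these three points in turn and then defend the specific functional form chosen.

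First, I would situate the definition by symmetry with the preceding two. Intersectoral Spillover fixes the person index $i$ and sums $b_k y_{ik}(s)$ over other \emph{inequities} $k \neq j$; Social Multipliers fix the inequity index $j$ and sum $d_k y_{kj}(s)$ over other \emph{individuals} $k \neq i$. This parallel makes the new definition the natural person-space counterpart of spillover and clarifies why it is a separate engine rather than a special case of one already listed. Second, I would verify well-posedness by rewriting the update in vector form, $y_{\cdot j}(t) = \delta^{t-s} y_{\cdot j}(s) + D\, y_{\cdot j}(s) + \epsilon_{\cdot j}(t)$, where $D$ is an $N \times N$ matrix with nonnegative off-diagonal entries $d_k$. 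Linearity together with mean-zero, finite-variance shocks guarantees finite second moments whenever $D$ has bounded spectral radius, so the matrix norm $\mu$ appearing in the ambient definitions of ``discriminatory'' remains well-defined on $(Y(t))_{t \geq 0}$. Setting every $d_k = 0$ recovers $(X(t))_{t \geq 0}$ exactly, which confirms nontriviality and shows the definition is a proper extension of the asymptotically equitable process, not a vacuous relabeling.

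The main obstacle I anticipate is conceptual rather than technical: defending the linear additive functional form against plausible alternatives such as threshold peer effects, multiplicative complementarities, or asymmetric network weights documented in the empirical literature (for instance, Bolte et al.\ 2020 on job referrals). I would handle this by presenting the displayed equation as the first-order local approximation around the asymptotically equitable trajectory, with $d_k$ absorbing the network-weighted derivative of any sufficiently smooth peer-effect mechanism. This framing preserves the comparative-statics logic that the rest of the section will invoke --- disrupt the engine by shrinking the coefficients $d_k$, or exploit it by targeting high-influence individuals whose inequity $y_{kj}(s)$ propagates most widely through the network --- while keeping the definition compatible with richer nonlinear models at higher order.
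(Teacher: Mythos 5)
You are right that this statement is a definition (the paper's \texttt{theorem} environment is typeset as ``Definition''), so there is no proof obligation; the paper itself simply follows it with an expository paragraph explaining that each person's inequity is influenced by the discounted value of its own past plus the inequities of network peers, which is exactly the person-space-versus-inequity-space parallel you draw. Your additional well-posedness remarks (vector form, recovery of $(X(t))_{t \geq 0}$ when all $d_k=0$) are consistent with, though not required by, the paper's treatment.
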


In this definition, the inequity $y_{ij}(t)$ faced by each person $i \in I$ is being influenced by the the presence of that inequity $y_{kj}(s)$ in other people $k \in \{k \neq i \}$ during an earlier time step $s<t$.  The parameters characterizing the nonzero strength of these ``peer effects" for person $i$ are $d_k \in \{k \neq i \}$.  Once again, the inequity $y_{ij}(t)$ faced by each person $i \in I$ is also influenced by its earlier value $y_{ij}(t)$, discounted at some ``natural" rate $\delta$.  This discounting would also have happened in the asymptotically equitable process $(X(t))_{t \geq 0}$, but in the amplified process $(Y(t))_{t \geq 0}$, there is also spillover across individuals.  As a result, there is, at least temporarily, more expected inequity at time t in the amplified process than there would have been in $(X(t))_{t \geq 0}$.

\begin{restatable}{proposition}{Socialshort}

A system that induces social multipliers is discriminatory over an interval.

\end{restatable}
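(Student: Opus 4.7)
The plan is to follow the same template used for Intersectoral Spillover and Intersectoral Synergy: couple the two processes through shared shocks and a common state at time $s$, take expectations, and exploit the extra non-negative peer-effect term that $S$ injects into the recursion. Concretely, I would set $y_{ij}(s) = x_{ij}(s)$ for every $i, j$ (a natural coupling since $S$ is defined to act on $X$) and then advance one period under each dynamic.

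Taking conditional expectations and using $E[\epsilon_{ij}(t)] = 0$, the social-multiplier recursion gives
$$E[y_{ij}(t) \mid X(s)] = \delta^{t-s} y_{ij}(s) + \sum_{k \neq i} d_k y_{kj}(s),$$
whereas the asymptotically equitable recursion for $X$ yields $E[x_{ij}(t) \mid X(s)] = \delta^{t-s} x_{ij}(s)$. Because inequities are measured ``optimistically'' so that for the disadvantaged group $E[x_{kj}(s)] \geq 0$, and because the peer-effect coefficients satisfy $d_k > 0$, the law of total expectation yields $E[y_{ij}(t)] \geq E[x_{ij}(t)]$ entry-wise, with strict inequality whenever some initial expected inequity is positive. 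This is precisely the entry-wise dominance obtained in the Spillover argument, only with the additional term now ranging over individuals rather than sectors.

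The final step is to transfer this entry-wise comparison to the matrix norm $\mu$. Under the coupling, $Y(t) = X(t) + Z(t)$ where $Z(t)$ is a non-negative random matrix with $E[Z(t)] \neq 0$, so monotonicity of $\mu$ on the non-negative cone gives $\mu(Y(t)) \geq \mu(X(t))$ pointwise and hence $E[\mu(Y(t))] > E[\mu(X(t))]$. This establishes discrimination on the interval $B = \{t\}$, and iterating the recursion shows it holds on a non-degenerate neighborhood of $t$.

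The main obstacle is precisely this final norm step: for a fully general matrix norm the entry-wise comparison need not translate to a norm comparison, so the argument implicitly restricts $\mu$ to the class of monotone norms (for instance Frobenius or any entry-wise $\ell^p$) that the preceding two propositions already rely on. Once that restriction is in place, or once one verifies directly that the triangle inequality together with strict positivity of $\mu$ on the non-negative cone suffices, the remainder of the argument is routine and parallels the Spillover and Synergy cases.
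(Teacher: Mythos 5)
Your proposal matches the paper's own argument: the paper likewise couples the processes by setting $y_{ij}(s)=x_{ij}(s)$, assumes the peer inequities $y_{kj}(s)>0$ for $k\in I$, and concludes $y_{ij}(t)>x_{ij}(t)$ pathwise from the strictly positive term $\sum_{k\neq i} d_k y_{kj}(s)$ before taking expectations. Your closing caveat about needing a monotone matrix norm $\mu$ is a point the paper passes over silently (it moves directly from $E[y_{ij}(t)]>E[x_{ij}(t)]$ to the conclusion), so flagging it is a small improvement rather than a divergence in approach.
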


(Proof in Appendix)\\

What do social multipliers suggest about combating systemic discrimination with public policy?  Let $y_{kj}(s)>0$ for all $k \in K$ be racial inequities in employment in a certain industry at time $s<t$ faced by a group of people in set $K$.  Suppose these lower the access to information held by other individuals contained in another set $I$, inducing employment inequities $y_{ij}(t)>0$ at time $t$ for all $i$ in $I$.  With public policy, we can disrupt the amplification mechanism by reducing $d_k$ for all $k \in K$.  For example, we can implement policies that desegregate referral networks, lessening the effect of a lack of Black Americans in an industry on other Black Americans' access to information relevant for employment.

We can also exploit the amplification mechanism by reducing $y_{kj}(s)>0$ for all $k \in K$.  For example, we can implement a temporary affirmative action policy that reduces inequities in employment in an industry.  In that case large parameters $d_k$ for all $K$ would help ensure the temporary affirmative action policy has persistent effects (Bolte et al., 2020).  In other words, once again, the effects of the equity-focused intervention would be amplified for the same reasons the effects of past injustices are amplified in the status-quo.

But the policy implications of social multipliers could be even more interesting.  Strong social multiplier effects can bolster the otherwise ``naturally" decaying inequities.  When social multipliers are sufficiently strong, injustices can have path dependent effects, and inequities can evolve with high sensitivity to initial conditions.  In other words, social multipliers can be discriminatory in the long run.  The fact that social multipliers can be discriminatory in the long run suggests inequities can cast long and permanent shadows in the absence of intervention.

\begin{restatable}{proposition}{Sociallong}

A system that induces sufficiently strong social multipliers is discriminatory in the long run.

\end{restatable}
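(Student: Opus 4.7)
The plan is to reduce the long-run comparison to a comparison of the mean dynamics, and then recover the statement about the norm $\mu$ via Jensen's inequality plus norm equivalence. Concretely, fix the single-step case $s=t-1$ and a single inequity $j$, and let $\mathbf{m}(t)\in\mathbb{R}^N$ denote the vector with entries $m_i(t)=E[y_{ij}(t)]$. Taking expectations of the defining recurrence (using $E[\epsilon_{ij}(t)]=0$) yields $\mathbf{m}(t)=A\,\mathbf{m}(t-1)$, where $A$ is the $N\times N$ matrix with $A_{ii}=\delta$ and $A_{ik}=d_k$ for $k\neq i$. Hence $\mathbf{m}(t)=A^{t}\mathbf{m}(0)$, and the long-run behavior of the mean inequity is governed by the spectral radius $\rho(A)$.

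Next, I would make precise what ``sufficiently strong'' should mean: the peer-effect parameters $d_k$ are large enough that $\rho(A)>1$. In the symmetric case $d_k\equiv d$, one computes eigenvalues $\delta+(N-1)d$ (with eigenvector $\mathbf{1}$) and $\delta-d$ (with multiplicity $N-1$), so the threshold is simply $d>(1-\delta)/(N-1)$. Under this condition, and assuming the initial mean $\mathbf{m}(0)$ has a nonzero component along the Perron eigenvector (which is automatic when the disadvantaged group starts with nonnegative, nonzero inequities), the iterates satisfy $\|\mathbf{m}(t)\|\to\infty$ at rate $\rho(A)^{t}$ in any vector norm.

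To transfer this to $\mu$, I would invoke two standard facts. First, all norms on the finite-dimensional space $\mathbb{R}^{N\times M}$ are equivalent, so $\mu(E[Y(t)])\to\infty$ whenever $\|E[Y(t)]\|_F\to\infty$. Second, because any norm is a convex function, Jensen's inequality gives $E[\mu(Y(t))]\ge\mu(E[Y(t)])$, and so $\lim_{t\to\infty}E[\mu(Y(t))]=\infty$. For the baseline process, the AR(1) representation $x_{ij}(t)=\delta^{t}x_{ij}(0)+\sum_{k=1}^{t}\delta^{t-k}\epsilon_{ij}(k)$ with $\delta\in(0,1)$ and finite shock variance $\sigma^2$ yields $E[x_{ij}(t)^2]\le x_{ij}(0)^2+\sigma^2/(1-\delta^2)$, uniformly bounded in $t$. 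Combining this with $\mu\le C\|\cdot\|_F$ and Jensen's inequality applied to $\sqrt{\cdot}$, we obtain $\sup_{t}E[\mu(X(t))]<\infty$. The strict inequality $\lim_{t\to\infty}E[\mu(X(t))]<\lim_{t\to\infty}E[\mu(Y(t))]$ then follows directly, fulfilling Definition~1.

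The main obstacle is conceptual rather than computational: pinning down a clean threshold condition on the $d_k$ (and, implicitly, on the size of the interacting population $I$) that is both transparent and robust to asymmetric peer weights. A secondary subtlety is guaranteeing that the initial mean vector is not orthogonal to the leading eigenvector of $A$; I would handle this by observing that, under the paper's optimistic measurement convention, the initial inequities are nonnegative with positive sum, so $\mathbf{1}^{\top}\mathbf{m}(0)>0$ and the Perron component is automatically excited.
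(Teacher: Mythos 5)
Your proposal is correct and follows the same core route as the paper: reduce to the linear dynamics of the expected inequities, identify a spectral threshold beyond which the equitable steady state is unstable, and conclude that expected inequities under $(Y(t))_{t\geq 0}$ diverge while those under $(X(t))_{t\geq 0}$ vanish. The differences are in generality and rigor rather than in strategy. The paper specializes to a common peer weight $d$ and a common initial inequity $x_0$, which collapses the mean dynamics to the scalar geometric sequence $x_0(\delta+d)^t$ with threshold $d=1-\delta$ (the $(N-1)$ factor being absorbed into $d$), and then separately treats the asymmetric two-person case, obtaining eigenvalues $\delta\pm\sqrt{d_1 d_2}$ and the instability condition $\sqrt{d_1 d_2}>1-\delta$, which highlights that reciprocated peer effects are what matter. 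Your version works with the full $N\times N$ interaction matrix $A$ from the start, gets the correct symmetric-case threshold $d>(1-\delta)/(N-1)$, and verifies that the Perron component of the initial mean vector is excited. More importantly, you supply two steps the paper elides: the transfer from divergence of the entrywise expectations to divergence of $E[\mu(Y(t))]$ via Jensen's inequality and equivalence of norms on $\mathbb{R}^{N\times M}$, and a uniform bound on $E[\mu(X(t))]$ via the second moments of the AR(1) baseline. The paper's definition of ``discriminatory in the long run'' is stated in terms of $E[\mu(\cdot)]$, yet its proof compares only $\lim_t E[y_{ij}(t)]$ with $\lim_t E[x_{ij}(t)]$ entrywise; your Jensen/norm-equivalence argument is exactly what is needed to close that gap, so your write-up is, if anything, the more complete of the two.
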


(Proof in Appendix)\\

There is a powerful silver lining in this seemingly bleak case.  Especially at its strongest, systemic discrimination betrays a major vulnerability hiding in plain sight.  On one hand, rectifying the initial inequity below a tipping point would be essentially fruitless in the long run, like jumping slowly and falling back down under gravity.  On the other hand, a sufficiently strong intervention that rectifies the initial inequity could exploit the social multiplier, generating path-dependent, self-reinforcing effects-like jumping fast enough to achieve escape velocity.  Similarly, disrupting the strength of a social multiplier can induce a phase transition that fundamentally alters the long-term trajectory of the expected inequities, like reversing the direction of gravity.  The machinery of systemic discrimination can be used in its undoing.

   \begin{figure}[h]
 \centering
 \begin{tabular}{cc}
 \subf{\includegraphics[scale=.3]{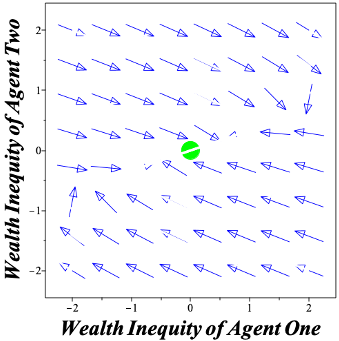}}
    {Weak Social Multipliers}
    &
     \subf{\includegraphics[scale=.3]{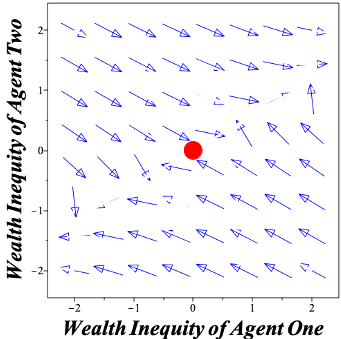}}
    {Strong Social Multipliers}
   \end{tabular}
   \caption{Strong Social Multipliers Make $S$ Discriminatory in the Long Run}
   \label{fig:Social1}
   \end{figure}

Consider the following policy example.  There is an empirical pattern in the literature that Black Americans in higher wealth ranks tend to have children whose wealth ranks drop more dramatically than do the children of their White counterparts (Pfeffer F.T., et al., 2015).  Among other explanations for this phenomenon, one possibility is social network effects (O'Brien, R.L., et al., 2012).  White Americans in upper wealth ranks have very different social networks than Black Americans in upper wealth ranks.  In particular, Black Americans in upper wealth ranks may have a network of disadvantaged people relying on them for financial support, whereas their equally wealthy White counterparts may even have the opposite-networks that can protect them from financial loss (Ager, P. et al., 2021).  As a result, segregated wealth transfer networks amplify/sustain wealth inequities (Meschede, T. et al., 2015).

This scenario is illustrated theoretically in Figure \ref{fig:Social1}.  For simplicity and clarity, the figure involves two agents, but the analogous principle holds for a population of $N>2$ people whose wealth inequities mutually influence one another.  This phase diagram is consistent with the system described at the end of the proof in the Appendix:

 \begin{align}
 E[y_{1j}(t)]=\delta E[y_{1j}(t-1)+d_1 E[y_{2j}(t-1)]\\
 E[y_{2j}(t)]=\delta E[y_{2j}(t-1)+d_2 E[y_{1j}(t-1)]
 \end{align}

In this example, $i=1,2$, $d_1$ captures the influence of Agent Two's wealth inequity on that of Agent One, and $d_2$ captures the influence of Agent One's wealth inequity on that of Agent Two.  The proof in the Appendix demonstrates that this system's equitable steady state of $(0,0)$ will only be stable if $\sqrt{d_1 d_2}<1-\delta$.  The social multiplier effects appear only as a product, suggesting that reciprocated social effects matter.  The effective strength of the social multiplier can be characterized by $\sqrt{d_1 d_2}$.

In the simulation, $\delta=0.6$ and $d_2=0.9$. In the first case, $d_1=0.1$, so the effective strength of the social multiplier is weak-that is, $\sqrt{d_1 d_2}<1-\delta$.  In this case the social multipliers only induce systemic discrimination in the short run-not in the long run.  Temporary interventions such as wealth transfers may have effects that persist longer due to social multipliers, but absent other amplification mechanisms or inustices, the inequities should eventually fade out over time regardless.  This is illustrated in Figure \ref{fig:Social1} on the left.  All trajectories eventually flow back to the steady state $(0,0)$ because it is stable, indicated by the green dot.   In the second case, $d_2=0.3$, so the effective strength of the social multiplier is strong-that is, $\sqrt{d_1 d_2}>1-\delta$.  Social multipliers are now strong enough to induce systemic discrimination in the long run, in that the effects of past injustices are sustained in a permanent, path-dependent fashion.  This is shown in Figure \ref{fig:Social1} on the right: the steady state of $(0,0)$ has become unstable, indicated by the red dot.  If the agents endured initial wealth inequities, the fact that their wealth levels are so closely interdependent, combined with the fact that their White counterparts' wealth are also interdependent, will sustain inequity for the foreseeable future.  A disruptive intervention that sufficiently decreases these interdependencies-for example, policies that desegregate wealth transfer networks-could induce a ``phase transition" that fundamentally alters the trajectories of wealth inequities, making the equitable steady state stable.  Alternatively, a dramatic wealth transfer that sufficiently decreases the wealth inequity beyond a tipping point will generate self-reinforcing effects.  Although the effects diverge due to the linearity of this model, this result is analogous to the more realistic case in nonlinear models in which trajectories flow to one of two stable steady states in the long run.

This mechanism has important implications for experiments on the fadeout or persistence of a reparations program for slavery and Jim crow.  Experiments that study the effects of reparations for individuals with social connections within otherwise broken communities, may underestimate the long term, self-replicating effects of a full-scale reparations program that improves the wealth of entire communities and social networks.    A similar argument has been made regarding universal pre-k (McMillon, 2024), and the argument applies across large-scale, equity-focused interventions.


The social multiplier achieves its power by allowing chain reactions across individuals.  But is not the only amplification mechanism that can generate systemic discrimination in the long run.  What follows is an exposition of a systemic feature that allows initial inequities to spark chain reactions across inequities.  This feature can be illustrated with a class of mathematical models called reinforcement processes.

\subsection{Reinforcement}

  $S$ can constitute a complex system of \textbf{reinforcement processes} for $(Y(t))_{t \geq 0}$, such that inequities are mutually reinforcing over time.  For example, there is strong feedback between wealth and neighborhood quality.  Wealth strongly influences neighborhood quality for several reasons including the fact that amenities that improve children's future productivity are costly.  Finally, neighborhood quality influences wealth through multiple channels, including determinants of income such as schooling quality, social capital, and crime (Chetty et al., 2016; Loury, G. 1977; Billings et al., 2019).  This cycle can continue even in the absence of continued direct discrimination.  As a consequence, wealth inequality can persist across generations. 

\begin{theorem}

Let $t \geq 0$ and suppose an injustice occurs at time $s<t$, inducing inequities $y_{ij}(s), y_{ik}(s)>0$.  A \textbf{Reinforcement Process} occurs for inequity $j$ at time $t$ if there exist $c_j, b_k > 0$, s.t. 

\begin{align}
y_{ij}(t) &= \delta y_{ij}(s)+ b_k y_{ik}(s)+\epsilon_{ij}(t)\\
y_{ik}(t) &= \delta y_{ik}(s)+ c_j y_{ij}(s)+\epsilon_{ik}(t)
\end{align}



That is, if an inequity feeds back into itself, whether directly or indirectly.

\end{theorem}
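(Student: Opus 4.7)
The final statement is a definition (the paper's \texttt{theorem} environment is declared to mean ``Definition''), so there is no implication to be derived and nothing to prove in the conventional sense. What one can usefully do instead is verify that the two linear equations capture the informal gloss ``an inequity feeds back into itself, whether directly or indirectly'' and are consistent with the framework of Section~2.

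The plan is threefold. First, I would check well-posedness: with $y_{ij}(s), y_{ik}(s)>0$, nonnegative coefficients $\delta, b_k, c_j$, and zero-mean finite-variance shocks, the two equations define a valid one-step update inside the class of amplified processes $(Y(t))_{t\geq 0}$ introduced in Section~2. The decay factor $\delta$ is inherited from the asymptotically equitable benchmark, and the additive cross-terms $b_k y_{ik}(s)$ and $c_j y_{ij}(s)$ are of exactly the form already allowed under Intersectoral Spillover, so the definition is a genuine specialization rather than a new primitive object.

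Second, I would make ``feeds back into itself'' concrete by iterating one step. Substituting an expression for $y_{ik}(s)$ in terms of values at time $s-1$ into the equation for $y_{ij}(t)$ (or, symmetrically, advancing the pair one more period) produces a term proportional to $b_k c_j \, y_{ij}(s-1)$: the current value of inequity $j$ now depends on its own earlier value through the product $b_k c_j$ of the cross-sector couplings. This algebraic observation is what turns the informal word ``indirectly'' into a precise statement: reinforcement need not come from a direct self-loop beyond $\delta$; it suffices that $j$ feed into $k$ and $k$ back into $j$, yielding an effective self-coefficient $\delta + b_k c_j$ at the two-step horizon.

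Third, I would situate the definition within the taxonomy: a Reinforcement Process is the bidirectional specialization of Intersectoral Spillover, replacing the one-way propagation $k\to j$ with a closed loop $j\leftrightarrow k$. Since the object is definitional, the only obstacle is conceptual rather than technical, namely to justify the name by showing that the two-equation linear form genuinely realizes indirect self-feedback, so that the subsequent propositions on short-run and long-run discrimination under reinforcement have a substantive definition to operate on.
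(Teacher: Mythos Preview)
You are correct that this is a definition (the paper's \texttt{theorem} environment is aliased to ``Definition''), so there is no proof in the paper to compare against; the paper simply states the two coupled equations and the informal gloss, then moves on to the propositions. Your supplementary remarks---in particular, iterating one step to exhibit the effective self-coupling $b_k c_j$ and framing reinforcement as the bidirectional closure of Intersectoral Spillover---are sound and in fact anticipate the eigenvalue calculation $\lambda = \delta \pm \sqrt{b_k c_j}$ that the paper carries out later in the proof of the long-run proposition, where the product $b_k c_j$ is precisely what governs stability.
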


While intersectoral spillover allows the propagation of inequity across outcomes, reinforcement involves intersectoral spillover that eventually leads back to the initial inequitable outcome.  As in the above example, a key consequence of reinforcement processes is the persistence of inequities over time, since the inequities become their own causes and effects in a vicious cycle.  This perspective on systemic discrimination has never been formalized generally until now, but it is a major part of the argument regarding systemic discrimination by prominent sociologist Barbara Reskin (Reskin, B., 2012).

\begin{restatable}{proposition}{Reinforcementshort}

A system that induces reinforcement processes is discriminatory over an interval.

\end{restatable}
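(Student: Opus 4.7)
The plan is to reduce Proposition \textbf{Reinforcementshort} to the already-established result for intersectoral spillover, exploiting the observation that a reinforcement process is structurally nothing more than two coupled intersectoral spillovers running in opposite directions between inequities $j$ and $k$.

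First, I would unpack the definition. Each of the two defining equations
$$y_{ij}(t) = \delta y_{ij}(s) + b_k y_{ik}(s) + \epsilon_{ij}(t),$$
$$y_{ik}(t) = \delta y_{ik}(s) + c_j y_{ij}(s) + \epsilon_{ik}(t),$$
has exactly the form required in the definition of intersectoral spillover, with coefficients $b_k, c_j > 0$. So each equation triggers the hypothesis of the intersectoral spillover proposition on its own, and the reinforcement system satisfies those hypotheses simultaneously for both inequities.

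Second, I would take expectations on both sides. Using $E[\epsilon_{ij}(t)] = E[\epsilon_{ik}(t)] = 0$ and comparing with the asymptotically equitable benchmark $(X(t))_{t \geq 0}$ starting from the same initial values $y_{ij}(s), y_{ik}(s) > 0$ at time $s$, one obtains
$$E[y_{ij}(t)] - E[x_{ij}(t)] = b_k y_{ik}(s) > 0, \qquad E[y_{ik}(t)] - E[x_{ik}(t)] = c_j y_{ij}(s) > 0.$$
So at time $t$ the expected entries $(i,j)$ and $(i,k)$ of $E[Y(t)]$ strictly exceed those of $E[X(t)]$, while all other entries are unchanged in expectation. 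Iterating the recursion forward from $t$ one step at a time, the strict positivity of $b_k y_{ik}$ and $c_j y_{ij}$ is preserved on a nondegenerate interval $B$ beginning at $t$, so the entrywise strict dominance holds throughout $B$.

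Third, I would invoke the matrix norm $\mu$ in exactly the fashion used in the intersectoral spillover proof: entrywise strict positivity plus strict dominance in the affected entries translates into $E[\mu(X(t))] < E[\mu(Y(t))]$ for all $t \in B$, which is precisely the definition of discriminatory on the interval $B$. The main obstacle will be the same delicate point that must already be addressed in the spillover proof, namely that an abstract matrix norm is not automatically monotone in entry magnitude; this requires either a monotonicity/positivity assumption on $\mu$ or working with a norm such as the entrywise $\ell_1$ or Frobenius norm for which the step is immediate. Once that step is justified for spillover, the reinforcement case inherits it wholesale, and no further work is needed beyond the substitutions above.
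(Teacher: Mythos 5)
Your proposal is correct and follows essentially the same route as the paper's own proof: take expectations, observe that the extra positive cross-terms $b_k y_{ik}(s)$ and $c_j y_{ij}(s)$ make the expected entries of $Y$ strictly exceed those of the benchmark $X$, and iterate the recursion forward so the dominance holds over a nondegenerate interval (the paper simply carries out the spillover computation twice rather than citing the spillover proposition as a lemma). Your remark about the monotonicity of the abstract norm $\mu$ is a fair caveat, but the paper's proofs share exactly the same implicit assumption, so it does not distinguish your argument from theirs.
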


(Proof in Appendix)\\


What are the policy implications of reinforcement processes?  Let $y_{ij}(t-1)$ represent inequities in wealth at time $t-1$ for person $i$.  Suppose the wealth inequities at time $t-1$ impact neighborhood quality inequities at time $1$ through $c_j>0$, which reinforce wealth inequities at time $t+1$ through $b_k>0$.  With public policy, we can disrupt the reinforcement by reducing $c_j$.  For example, we can implement policies that allow poor families to live in high-quality neighborhoods, with place-based interventions or person-based housing voucher programs.  

Alternatively, we could exploit the amplification mechanism by reducing $y_{ij}(t-1)$ directly, possibly through baby bonds or a wealth-based reparations program.  In that case parameters $c_j, b_k$ would help ensure the wealth transfer has persistent effects.  In other words, once again, the effects of the equity-focused intervention would be amplified for the same reasons the effects of past injustices are amplified in the status-quo.

However, like social multipliers, reinforcement processes can bolster otherwise ``naturally" decaying inequities so much that the long-term trajectories of inequities are fundamentally altered.  Strong reinforcement processes can sustain one or many inequities within the same individual, and temporary interventions on one target inequity may be fruitless as the existence of other inequities eventually restore the target inequity anew.  When reinforcement processes are sufficiently strong, injustices can have path dependent effects, and inequities can evolve with high sensitivity to initial conditions.  In other words, reinforcement processes can be discriminatory in the long run.  If $S$ contains sufficiently strong reinforcement processes, injustices can cast long and permanent shadows in the absence of intervention. 

\begin{restatable}{proposition}{Reinforcementlong}

A system that induces sufficiently strong reinforcement processes is discriminatory in the long run.

\end{restatable}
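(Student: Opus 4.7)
The plan is to reduce the analysis to a two-dimensional linear dynamical system by restricting attention to a single individual $i$ and the pair of mutually reinforcing inequities $(y_{ij}, y_{ik})$. Setting $s = t-1$ in the two defining equations and stacking them gives the vector recurrence
$$\mathbf{v}_i(t) = A\,\mathbf{v}_i(t-1) + \boldsymbol{\epsilon}_i(t), \qquad A = \begin{pmatrix} \delta & b_k \\ c_j & \delta \end{pmatrix},$$
where $\mathbf{v}_i(t) = (y_{ij}(t),\, y_{ik}(t))^\top$ and $\boldsymbol{\epsilon}_i(t)$ has mean zero and finite covariance. The eigenvalues of $A$ are $\lambda_{\pm} = \delta \pm \sqrt{b_k c_j}$, so the spectral radius is $\rho(A) = \delta + \sqrt{b_k c_j}$. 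The natural interpretation of ``sufficiently strong'' reinforcement is exactly the condition $\sqrt{b_k c_j} > 1-\delta$, which mirrors the threshold derived in the social-multipliers section and is equivalent to $\rho(A) > 1$.

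Next, I would show that $\rho(A) > 1$ forces $\lim_{t \to \infty} E[\mu(Y(t))] = \infty$. There are two cases. If the initial mean vector $E[\mathbf{v}_i(0)]$ has a nonzero component along the dominant eigenvector, then $\|E[\mathbf{v}_i(t)]\|$ grows like $\rho(A)^t$, and by Jensen's inequality applied to the convex map $\mu$, $E[\mu(Y(t))] \geq \mu(E[Y(t)]) \to \infty$. If the initial means happen to lie on the invariant subspace of the smaller eigenvalue, I would instead iterate the second-moment recurrence $V(t) = A\,V(t-1)\,A^\top + \Sigma$ with $\Sigma$ the shock covariance; since $\rho(A \otimes A) = \rho(A)^2 > 1$, the trace $\mathrm{tr}(V(t))$ diverges, and a lower bound of the form $E[\mu(Y(t))]^2 \gtrsim \mathrm{tr}(V(t))$ (valid for any norm by equivalence of norms on finite-dimensional spaces) again yields divergence.

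Finally, I would contrast this with the asymptotically equitable benchmark. Because $X(t)$ is a stable AR(1) with $\delta \in (0,1)$, both $E[X(t)]$ and the covariance of $X(t)$ converge, so $\lim_{t \to \infty} E[\mu(X(t))]$ is finite. Combined with the divergence established for $Y$, this gives
$$\lim_{t \to \infty} E[\mu(X(t))] < \lim_{t \to \infty} E[\mu(Y(t))],$$
which is precisely the condition for $S$ to be discriminatory in the long run.

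The main obstacle I anticipate is not the eigenvalue computation itself, which is immediate, but handling the abstract matrix norm $\mu$ without losing generality: Jensen's inequality goes the wrong way if the growth is purely in the variance rather than the mean, so one must invoke norm equivalence in finite dimensions to convert spectral-radius statements about $A$ (and $A \otimes A$) into statements about $E[\mu(Y(t))]$. A secondary subtlety is that the stated definition writes the reinforcement with an arbitrary lag $s < t$; I would comment that without loss of generality one may take $s = t-1$ (iterating shorter-lag updates to recover any longer-lag dynamics), so the matrix-recurrence reduction above captures the essential content of the definition.
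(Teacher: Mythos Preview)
Your proposal is correct and takes essentially the same approach as the paper: both reduce to the $2\times 2$ linear system with matrix $A=\begin{pmatrix}\delta & b_k\\ c_j & \delta\end{pmatrix}$, compute the eigenvalues $\delta\pm\sqrt{b_k c_j}$, and identify the instability threshold $\sqrt{b_k c_j}>1-\delta$ as the meaning of ``sufficiently strong.'' The paper stops at observing that the $(0,0)$ steady state becomes unstable so expected inequities diverge, whereas you add extra care about the abstract norm $\mu$ and a second-moment branch; that branch is unnecessary here since the setup stipulates strictly positive initial inequities, which guarantees a nonzero component along the Perron eigenvector, and your claimed bound $E[\mu(Y(t))]^2\gtrsim\mathrm{tr}(V(t))$ is stated in the wrong direction (Jensen gives $E[\mu(Y)^2]\geq E[\mu(Y)]^2$, not the reverse).
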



\begin{proof}

Consider a function

$$S: (X(t))_{t \geq 0} \rightarrow (Y(t))_{t \geq 0}$$

acting on an asymptotically equitable random process $(X(t))_{t \geq 0}$ in such a way to produce reinforcement processes.  Then by definition, there exist $c_j, b_k > 0$, for some index $k \neq j$, s.t.

\begin{align}
y_{ij}(t+1) &= \delta y_{ij}(t)+ b_k y_{ik}(t)+\epsilon_{ij}(t+1)\\
y_{ik}(t+1) &= \delta y_{ik}(t)+ c_j y_{ij}(t)+\epsilon_{ik}(t+1)
\end{align}

where $E[\epsilon_{ij}(t+1)]=E[\epsilon_{ik}(t+1)]=0$.\\

Taking expectations results in the following system:

\begin{align}
E[y_{ij}(t+1)] &= \delta E[y_{ij}(t)]+ b_k E[y_{ik}(t)]\\
E[y_{ik}(t+1)] &= \delta E[y_{ik}(t)]+ c_j E[y_{ij}(t)]
\end{align}

If a steady-state exists, then $\lim_{t \rightarrow \infty} E[y_{ij}(t+1)]=\lim_{t \rightarrow \infty}  E[y_{ij}(t)]$ and $\lim_{t \rightarrow \infty} E[y_{ik}(t+1)]=\lim_{t \rightarrow \infty} E[y_{ik}(t)]$.  In that case the steady-state has the form

\begin{align}
Y &= \delta Y +b_k X\\
X &= \delta X+ c_j Y
\end{align}

This system has unique solution $(0,0)$.  Now consider the stability of the fixed point.  The Jacobian is 

\[ \mathbf{J} = \begin{pmatrix} \delta & b \\ c & \delta \end{pmatrix} \]

which leads to characteristic equation

\[ \begin{vmatrix} \delta - \lambda & b \\ c & \delta - \lambda \end{vmatrix} = (\delta - \lambda)^2 - bc \]

Therefore the eigenvalues are

\[ \lambda = \frac{2 \delta \pm \sqrt{4 \delta^2 - 4 \delta^2 + 4bc}}{2} \]

\[ \lambda = \frac{2 \delta \pm \sqrt{4bc}}{2} \]

\[ \lambda = \delta \pm \sqrt{bc} \]

\begin{figure}[h]
 \centering
 \begin{tabular}{ccc}
 \subf{\includegraphics[scale=.3]{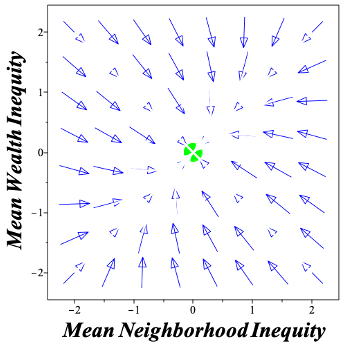}}
    {Weak Reinforcement}
    &
     \subf{\includegraphics[scale=.3]{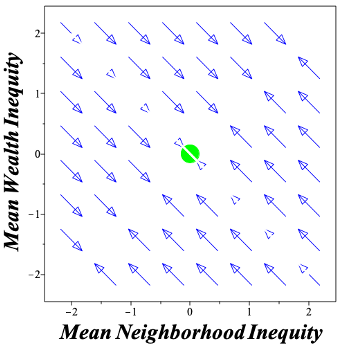}}
    {Phase Transition}
     &
     \subf{\includegraphics[scale=.3]{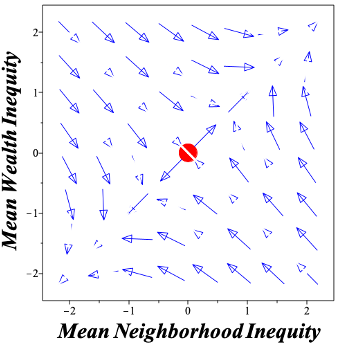}}
    {Strong Reinforcement}
   \end{tabular}
   \caption{Strong Reinforcement Makes S Discriminatory in the Long Run}
   \label{fig:Reinforcement1}
   \end{figure}

The system is stable if\\

\(|\delta + \sqrt{bc}| < 1\) and \(|\delta - \sqrt{bc}| < 1\)\\

Since we have assumed $|\delta|<1$ and $b,c>0$, the condition simplifies to\\

\(|\delta + \sqrt{bc}| < 1\).\\

Therefore, for fixed $\delta$, as $bc$ increases, the system can undergo a phase transition in which the (0,0) steady state goes from stable to unstable.  For sufficiently large $bc$, any initial inequity for $y_{ik}(0)$ or $y_{ij}(0)$ will diverge.  Again, this divergence is only due to the linearity of this model; this result is analogous to the more realistic case in nonlinear models in which the unstable $(0,0)$ state separates two stable steady states in the long run.  The phase transition threshold for $bc$ becomes smaller as $\delta$ increases.  Qualitatively, the fact that $bc$ appears as a product suggests it is the mutual reinforcement of two spillovers, and not each spillover separately, that matters.  When the steady state is stable, the convergence rate is slower for larger values of $\delta+bc$, so strong reinforcement induces bottlenecks as well.\\

   \begin{figure}[h]
 \centering
 \begin{tabular}{cc}
 \subf{\includegraphics[scale=.3]{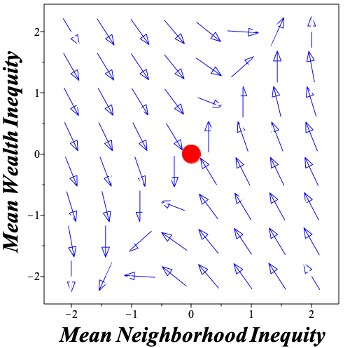}}
    {Strong Reinforcement, $b>c$}
    &
     \subf{\includegraphics[scale=.3]{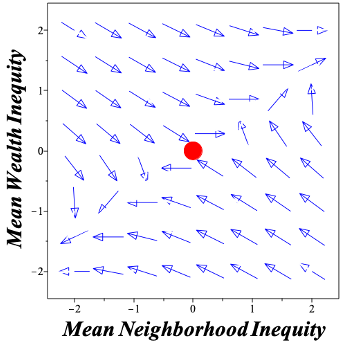}}
    {Strong Reinforcement, $c>b$}
   \end{tabular}
   \caption{Strength of c vs. b Determines Sensitivity to Initial Conditions}
   \label{fig:Reinforcement2}
   \end{figure}

Suppose there is reinforcement between inequities in neighborhood quality ($y_{ik}(t)$) and inequities in wealth ($y_{ij}(t)$).  Above are figures depicting the dynamics of the continuous time version of this system with phase portraits.  Figure \ref{fig:Reinforcement1} shows how for fixed $\delta=0.5$, increasing the strength of reinforcement from $bc=0.2$ to $bc=0.5$ to $bc=0.8$ leads to different dynamics.  In the first case, the equitable steady state (0, 0) is stable, and any inequities will eventually dissipate.  In the second case, a phase transition occurs such that initial inequities proceed to the $45$ degree line and remain thereafter.  In this case there are infinitely many stable steady states, each determined by the initial inequities.  In the final case, the equitable steady state (0, 0) is unstable. Initial inequities diverge, and, were the initial conditions reversed, they would diverge in the opposite direction.  In this case $S$ is discriminatory in the long run since the expected inequities diverge relative to what they would be in the steady state ($0, 0$).   Figure \ref{fig:Reinforcement2} considers only the case in which the steady state is unstable.  It demonstrates that the direction of the sensitivity to initial conditions depends on the relative strength of the reinforcement parameters.  In this example, if the wealth inequity is more sensitive to neighborhood quality inequity, than neighborhood quality inequity is to wealth inequity ($b>c$), then the direction of the initial inequity in neighborhood quality matters the most for what happens in the long run.  The reverse is true when $c>b$.  It can be shown that similar intuition holds for $M>2$ mutually reinforcing inequities.

\end{proof}

The policy implications of this case cannot be overstated.  If reinforcement processes induce systemic discrimination in the long run, then potentially, not just one, but several inequities can be mutually sustained in a path dependent fashion.  This is consistent with a world in which ongoing generations descended from survivors of slavery and Jim Crow would face a wide range of mutually reinforcing inequities across generations in wealth, health, education, income, neighborhood quality, and criminal justice outcomes.  In such a world, the effects of temporary interventions that induce successful but modest reductions in health inequities (e.g., a health education intervention) may tend to fade out in the long run as other inequities allow the health inequities to resurface (e.g., wealth inequities that alter access to nutrition).  The long run effects of injustices, sustained through a mutually reinforcing, interrelated inequities, can frustrate public policy.  This concern has been fought by academic voices and approaches across multiple fields, from the sociological notion of ``Uber discrimination" (Reskin, 2012), to ecological systems theory in education (Patton et al., 2012) and urban planning (Schell et al., 2020), to fundamental cause theory in public health (Phelan, J.C. et al., 2010; Phelan, J.C. et al., 2015).

However, once again, the greatest strength of systemic discrimination is actually its Achilles' Heel hiding in plain sight.  On one hand, rectifying the initial inequities below a tipping point would be fruitless in the long run, like traveling towards a star that is simply too far to reach in a human lifetime.  Yet, a strong, courageous intervention that rectifies the appropriate combination of inequities could exploit the physics of the reinforcement system for what would essentially be a ``quantum leap" in equity.  Beyond a threshold, the reinforcement processes would serve as a booster engine for the effects of the equity-focused intervention.  Similarly, disrupting the strength of the feedback loops can induce a phase transition that fundamentally alters the long-term trajectory of the expected inequities, like folding spacetime to shorten the distance to a star.  This would all but destroy the sensitivity of long-run inequities to the unjust initial conditions.  The physics of systemic discrimination can be harnessed for its undoing.  




\section{Discussion: Limitations and Future Directions}

The main contribution of this paper is the formalization of a non-exhaustive taxonomy of mechanisms that amplify inequities in various ways.  This section highlights the limitations of the formal model itself, including implications for measurement, testable predictions, and policy.  

\subsection{Theory and Measurement, and Policy}

The model presented in this paper is intentionally simple, to clearly illustrate how amplification mechanisms contribute to inequities.  There are many other functional forms and modeling approaches I could have taken, but this was the simplest model that makes the key point of the paper.  However, the simplicity has led to several limitations.  

First, the model assumes the disadvantaged group does not change its group membership.  This is a safe assumption for the history of Black Americans.  It is not a safe assumption for Mexican Americans and ethnic Jews.  The definition of ``White" has changed over time in America, subsuming some groups such as ethnic Jews, and removing others such as Mexican Americans.  It could easily be extended to account for group membership, but such a complexity could distract from the central point of the paper.

Second, the matrices $X(t)$ and $Y(t)$ only keep track of the disadvantaged group relative to the advantaged group.  In principle the model does implicitly include the advantaged group's means, but does not keep track of individuals within the advantaged group.  However, the model can easily be altered to keep track of members of the advantaged group relative to group means of the disadvantaged group.  The main implications of the model would not change.  However, it could orient the kinds of examples the reader considers.  For example, social multipliers may amplify inequity through the spread of injustice across a subset of people in the advantaged group, rather than the spread of the effects of injustice among the disadvantaged.  Derek Chauvin murdered George Floyd, but his peers watched it happen.  Peer effects can allow the total contribution of the police force to violence against Black Americans to be more than the sum of the individual contributions were the police acting in isolation.

Third, the ``asymptotically equitable" random process $(X(t))_{t \geq 0}$ was constructed as a Markov process with decay factor $\delta \in (0,1)$.    If $\delta>1$, then initial inequities will never dissipate.  This is essentially what we might expect if White Americans simply invested their wealth advantages as compound interest in every generation following slavery.  But Section $3.4$ distinguishes between the ``natural decay" of an inequity through $\delta \in (0,1)$, and direct or indirect reinforcement.  In the case of compound interest, the reinforcement is direct, so $j=k$ and $c_j=b_k>0$.  Therefore the restriction $\delta \in (0,1)$ is without loss of generality.

Fourth, I imposed linear specifications for social multipliers and reinforcement processes.  This was intended for pedagogical simplicity.  Both of these amplification mechanisms can generate systemic discrimination in the long run, but this appears in the linear models as either a knife-edge case, or as expected inequities diverging over time.  While this is empirically unrealistic, something much more realistic and analogous can happen in nonlinear versions of the same models.  When the strength of social multipliers or reinforcement is sufficiently strong in nonlinear models, multiple steady states can exist.  Initial inequities can lead to converge to a finite but unfavorable steady state, which the system would not have converged to under different initial conditions.  I chose to keep linear specifications because divergence in the linear case is still indicative of more realistic path dependence in nonlinear cases.

Fifth, some notions of systemic discrimination argue that a dominant group can intentionally alter social and political systems to retain power.  $S$ is a very flexible function that can embed political economy models in which a dominant group can respond strategically to policy interventions.  This paper is intentionally agnostic about the nature of $S$ beyond the abstract ways it can amplify inequities.  Whether this occurs in part due to the endogenous response of a dominant group, intentional or not, is beyond the scope of this paper.

Finally, the model is constructed to consider the ramifications of temporary injustices and temporary interventions.  It is flexible enough to account for ongoing, endogenously generated discrimination, but it does not consider exogenously imposed permanent injustices.  If there is some baseline constant level of animus automatically generated by society, then I would either need to leave the definition of $(X(t))_{t \geq 0}$ as it is and accept that such factors also count as ``systemic," or I would need to redefine $(X(t))_{t \geq 0}$ to be equitable in the long run up to a constant $c$ that characterizes the unavoidable baseline inequity.  Neither case rules out the importance of focusing on amplification mechanisms.  In fact, the presence of such ongoing exogenously generated injustices makes fighting more strategically with amplifiers all the more important.

This model generates testable predictions, since the presence of these amplification mechanisms can be tested with experimental or quasi-experimental methods.  Intersectoral spillover is evidenced by a causal effect of one outcome on another through standard methods.  It follows from the fact that one outcome impacts another, that an injustice inducing inequities in one will spill over into another.  Intersectoral synergy is more subtle.  One needs two sources of random variation.  It is evidenced either when the negative impact of a harmful shock is worsened in those induced to retain a lower value of another variable, or when the positive impact of a helpful shock is heightened in those induced to retain a higher value of another variable.  An example of the latter case is Johnson's and Jackson's approach to identifying dynamic complementarity (Johnson and Jackson, 2019):  those exposed to head start benefited more from desegregation later in life.  There is also an extensive literature on the identification of social multipliers interaction effects, in discrete choice models and networks (Glaeser et. al, 2003; Blume et. al, 2005; Durlauf et. al, 2010).

The identification of reinforcement processes in social science contexts can be challenging and differs across fields.  A key concern is time-varying confounding, since outcomes and regressors are mutually reinforcing over time.  Identification is possible with sequential exogeneity (past error terms can be correlated with later regressors, but all past regressors must be uncorrelated with future error terms).  Sociologists Paul Allison and coauthors (2017) discuss the estimation of dynamic panel data models under sequential exogeneity.  Wodtke (2018) also proposes a method to handle time-varying confounding.  Many macroeconomic models include reinforcement processes, which are generally estimated structurally.  The evidence of the processes' existence lies with a combination of calibration, exogenous variation, and in the extent to which their predictions are consistent with empirically observed phenomena.  

Finally, the taxonomy included policy examples under each amplification mechanism.  Although an extensive discussion of the pros and cons of specific policies is outside the scope of this research, the paper does illustrate how the model can be used to think through how to harness public policy to fight systemic discrimination.  The main purpose of the policy discussions was to illustrate what role the intended consequences of these kinds of policies might play in the calculus of systemic discrimination.  Specifically, each example involved a policy that directly disrupts an amplifier of inequity, and a policy that fights the inequity itself and exploits the amplifier in the process.  Collectively, they demonstrate how to use the machinery of systemic discrimination against itself across a wide range of contexts.

Note, however, that each of these policies have pros and cons.  For example, place-based interventions that improve neighborhoods can lead to gentrification and displacement, and housing voucher programs may eventually lead to White flight.  Despite that, an intervention that successfully disrupts the reinforcement between wealth and neighborhood quality will necessarily reduce the sensitivity of the process to its unjust initial conditions; and an intervention that successfully exploits the reinforcement with an exogenous wealth transfer will necessarily benefit from persistent effects.  Similarly, there is conflicting evidence on the effects of affirmative action (Fryer, R. and Loury, G., 2005; Kurtulus, F., 2016), since it is feasible that it can contribute to statistical discrimination.  Despite this, the heavy reliance on referrals can contribute to the possibility of persistent effects for a temporary affirmative action shock, and the disruption of segregated referral networks can reduce the sensitivity of labor market inequities to unjust initial conditions.

\section{Conclusion}

\textit{``Everything we see is a shadow cast by that which we do not see"} \vspace{0.3cm}

-\textbf{MLK, Jr.}\\

We cannot see gravity, but we can see its effects.  We can prevent it from harming us, or even exploit it to our advantage--but not before we understand it well enough to formalize and test.  Analogously, the large and persistent inequities we see are not simply due to injustices in a vacuum.  They are also shadows cast by insidious, seemingly benign processes that amplify their initial effects.  This perspective reveals that these insidious amplification mechanisms can either be directly disrupted, or exploited, to ensure that the effects of equity-focused interventions are amplified instead.  This paper develops a technical, testable, and policy-oriented understanding of what makes systemic discrimination ``systemic."  It reveals how the machinery of systemic discrimination can be used against itself.  Finally, because the amplification mechanisms themselves connect well-understood concepts in economics, the gap this paper fills could lead to more rapid, strategic, and coordinated strides in the race towards justice.





    \newpage

\section{Appendix}

\subsection{Proofs}


\Spillover*

\begin{proof}

Suppose the existence of inequities $y_{ik}(s)>0$ and $y_{ij}(s)=x_{ij}(s)$. Consider a function

$$S: (X(t))_{t \geq 0} \rightarrow (Y(t))_{t \geq 0}$$

acting on an asymptotically equitable random process $(X(t))_{t \geq 0}$ in such a way to produce intersectoral spillover.  Then by definition, for some $s<t$ and inequity $j$, there exists $b_k > 0, k \neq j$, s.t.

$y_{ij}(t)=\delta^{t-s} y_{ij}(s)+b_k y_{ik}(s)+\epsilon_{ij}(t)$.  Therefore

 \begin{align}
y_{ij}(t)=\delta^{t-s} y_{ij}(s)+b_k y_{ik}(s)+\epsilon_{ij}(t)\\
=\delta^{t-s} x_{ij}(s)+b_k y_{ik}(s)+\epsilon_{ij}(t)\\
>\delta^{t-s} x_{ij}(s)+\epsilon_{ij}(t)\\
=x_{ij}(t)
  \end{align}
  
 It follows that $E[y_{ij}(t)]>E[x_{ij}(t)]$. Hence $S$ is discriminatory over the interval $(s,t)$.
  
\end{proof}

\Spilloverlong*

\begin{proof}

Suppose the existence of inequities $y_{ik}(t-1)>0, y_{ij}(t-1)=x_{ij}(t-1)$. Consider a function

$$S: (X(t))_{t \geq 0} \rightarrow (Y(t))_{t \geq 0}$$

acting on an asymptotically equitable random process $(X(t))_{t \geq 0}$ in such a way to produce intersectoral spillover.  Then by definition, for some inequity $j$, there exists $b_k > 0, k \neq j$, s.t.

$$y_{ij}(t)=\delta y_{ij}(t-1)+b_k y_{ik}(t-1)+\epsilon_{ij}(t)$$  

In contrast, for the asymptotically equitable random process $(X(t))_{t \geq 0}$, 

$$x_{ij}(t)=\delta y_{ij}(t-1)+\epsilon_{ij}(t)$$  

Taking expectations results in the system

 \begin{align}
E[y_{ij}(t)] &= \delta E[y_{ij}(t-1)]+b_k E[y_{ik}(t-1)]\\
E[x_{ij}(t)] &= \delta x_{ij}(t-1)\\
  \end{align}
  
If a steady-state exists, then $\lim_{t \rightarrow \infty} E[y_{ij}(t)]=\lim_{t \rightarrow \infty}  E[y_{ij}(t-1)]$ and $\lim_{t \rightarrow \infty} E[x_{ij}(t)]=\lim_{t \rightarrow \infty} E[x_{ij}(t-1)]$.  In that case the steady-state has the form

\begin{align}
Y &= \delta Y +b_k X\\
X &= \delta X
\end{align}

which implies a steady state of $(0,0)$.  The Jacobian is:

\[
J = \begin{bmatrix} \delta & b_k \\ 0 & \delta \end{bmatrix}
\]

which leads to characteristic equation\\

\[
\text{det}(J - \lambda I) = \text{det} \left( \begin{bmatrix} \delta-\lambda & b_k \\ 0 & \delta-\lambda \end{bmatrix} \right)
\]
\[
= (\delta-\lambda)^2
\]

Therefore the eigenvalues are $\lambda=\delta$, multiplicity of $2$.  Since $0<\delta<1$ by assumption, the steady state $(0,0)$ is stable.  Convergence depends only on $\delta$ and is slower for larger values of $\delta$.\\

Altogether, $$ \lim_{t \rightarrow \infty} E[y_{ij}(t)]=0=\lim_{t \rightarrow \infty}  E[x_{ij}(t)] $$

$$\lim_{t \rightarrow \infty}  E[y_{ik}(t)]=0=\lim_{t \rightarrow \infty} E[x_{ik}(t)]$$

and $S$ is not discriminatory in the long run.  Essentially, Intersectoral Spillover alone generates short-run inequity, but is not enough on its own to generate long-run inequity.\\  

\end{proof}

\Synergy*

\begin{proof}


 
 

 
 



Suppose without loss of generality the existence of inequities $y_{ik}(0)>0$ and $y_{ij}(0)=x_{ij}(0)$. Consider a function

$$S: (X(t))_{t \geq 0} \rightarrow (Y(t))_{t \geq 0}$$

acting on an asymptotically equitable random process $(X(t))_{t \geq 0}$ in such a way to produce intersectoral synergy.  Then by definition, 
 
 \begin{enumerate}
 
 \item If $\epsilon_{ij}(t)>0$, there exists $c_k>0$, $k \neq j$, s.t.
$$y_{ij}(t)=\delta y_{ij}(t-1)+c_k y_{ik}(t-1) \epsilon_{ij}(t)+\epsilon_{ij}(t)$$

\item If $\epsilon_{ij}(t)<0$, there exists $c_k<0$, $k \neq j$, s.t.
$$y_{ij}(t)=\delta y_{ij}(t-1)+c_k y_{ik}(t-1) \epsilon_{ij}(t)+\epsilon_{ij}(t)$$
 
 \end{enumerate}
 
 Recall that $$y_{ik}(t)=\delta y_{ik}(t-1)+\epsilon_{ik}(t)$$ 
 
 which implies that 
 
 $$y_{ik}(t)=\delta^t y_{ik}(0)+\epsilon_{ik}(t)$$ 
 
 The way the process $y_{ij}(t)$ evolves depends on the sign of $\epsilon_{ij}(t)$.  Solve this by conditioning, recognizing that $\epsilon_{ij}(t)$ is independent of $\epsilon_{ik}(t-1)$ and $y_{ik}(t-1)$:
 
 $$E[y_{ij}(t)|\epsilon_{ij} \geq 0] = \delta E[y_{ij}(t-1)|\epsilon_{ij}(t) \geq 0]+c_k E[y_{ik}(t-1) \epsilon_{ij}(t)|\epsilon_{ij}(t) \geq  0]+E[\epsilon_{ij}(t)|\epsilon_{ij}(t) \geq  0]$$
 
 $$= \delta E[y_{ij}(t-1)|\epsilon_{ij}(t) \geq 0]+c_k E[y_{ik}(t-1)] E[\epsilon_{ij}(t)|\epsilon_{ij}(t)>0]+E[\epsilon_{ij}(t)|\epsilon_{ij}(t) \geq  0]$$
 
 $$= \delta E[y_{ij}(t-1)|\epsilon_{ij}(t) \geq 0]+c_k \left( \delta^t y_{ik}(0) \right) E[\epsilon_{ij}(t)|\epsilon_{ij}(t) \geq 0]+E[\epsilon_{ij}(t)|\epsilon_{ij}(t) \geq 0]$$
 
 Define $X(0):= y_{ik}(0)$, and $e_j := E[\epsilon_{ij}(t)|\epsilon_{ij}(t) \geq 0]$ Note that $e_j$ may not equal 0.  Also define $Y(t):=E[y_{ij}(t)|\epsilon_{ij}(t) \geq 0]$ for $t \geq 1$, with $Y(0)=y_{ij}(0)$.  Then our analysis leads us to the recursive process
 
 $$Y(t)=\delta Y(t-1)+c_k \delta^t X(0) e_j+e_j$$
 
 which has initial condition $Y(1)=\delta Y(0)+c_k X(0) e_j+e_j$, and for $t \geq 2$, the explicit form
 
 $$Y(t)= \delta^t Y(0)+t c_k \delta^t X(0) e_j + e_j \sum_{l=1} \delta^{t-l}$$
 
 Substituting,\\
 
 $$E[y_{ij}(t)|\epsilon_{ij}(t) \geq 0]=\delta^t y_{ij}(0)+t  \delta ^t c_k y_{ik}(0) e_j+e_j$$
Comparing this to how inequity $j$ would evolve for person $i$ under the asymptotically equitable random process $(X(t))_{t \geq 0}$:

\begin{align}
E[y_{ij}(t)|\epsilon_{ij}(t) \geq 0]= \delta^t y_{ij}(0)+t  \delta ^t c_k y_{ik}(0) e_j+e_j\\
= \delta^t x_{ij}(0)+t  \delta ^t c_k y_{ik}(0) e_j+e_j\\
\geq \delta^t x_{ij}(0)+e_j\\
=E[x_{ij}(t)|\epsilon_{ij}(t) \geq 0]
\end{align}

Here the inequality stems from the fact that $c_k e_j \geq 0$.\\  

A symmetric analysis applies for the case in which $\epsilon_{ij}(t)<0$.  Define $v_j:=E[\epsilon_{ij}|\epsilon_{ij}(t)<0]$.  Comparing this to how inequity $j$ would evolve for person $i$ under the asymptotically equitable random process $(X(t))_{t \geq 0}$:

\begin{align}
E[y_{ij}(t)|\epsilon_{ij}(t) < 0]= \delta^t y_{ij}(0)+t  \delta ^t c_k y_{ik}(0) v_j+v_j\\
= \delta^t x_{ij}(0)+t  \delta ^t c_k y_{ik}(0) v_j+v_j\\
> \delta^t x_{ij}(0)+v_j\\
=E[x_{ij}(t)|\epsilon_{ij}(t)<0]
\end{align}

The strict inequality stems from the fact that $c_k v_j > 0$.  Finally,

\begin{align}
E[y_{ij}(t)]=E[y_{ij}(t)|\epsilon_{ij}(t) \geq 0] Prob(\epsilon_{ij}(t) \geq 0)+E[y_{ij}(t)|\epsilon_{ij}(t) < 0] Prob(\epsilon_{ij}(t) < 0) \\
\geq E[x_{ij}(t)|\epsilon_{ij}(t) \geq 0] Prob(\epsilon_{ij}(t) \geq 0)+E[y_{ij}(t)|\epsilon_{ij}(t) < 0] Prob(\epsilon_{ij}(t) < 0) \\
> E[x_{ij}(t)|\epsilon_{ij}(t) \geq 0] Prob(\epsilon_{ij}(t) \geq 0)+E[x_{ij}(t)|\epsilon_{ij}(t) < 0] Prob(\epsilon_{ij}(t) < 0) \\
=E[x_{ij}(t)]
\end{align}

Therefore intersectoral synergy is discriminatory on the interval $(0,t)$.  Essentially, although $E[\epsilon_{ij}(t)]=0$, any nonzero value of $\epsilon_{ij}(t)$ will have induced more inequity at time $t$ under a transformed process $(Y(t))_{t \geq 0}$ operating through intersectoral synergy than it would have under the asymptotically equitable process $(X(t))_{t \geq 0}$.  So, outside of the trivial case in which $\epsilon_{ij}(t)$ is always exactly equal to zero, intersectoral synergy is discriminatory on an interval.
\end{proof}

\Synergylong* 

\begin{proof}

Suppose without loss of generality the existence of inequities $y_{ik}(0)>0$ and $y_{ij}(0)=x_{ij}(0)$. Consider a function

$$S: (X(t))_{t \geq 0} \rightarrow (Y(t))_{t \geq 0}$$

acting on an asymptotically equitable random process $(X(t))_{t \geq 0}$ in such a way to produce intersectoral synergy.  Then by definition, 
 
 \begin{enumerate}
 
 \item If $\epsilon_{ij}(t)>0$, there exists $c_k>0$, $k \neq j$, s.t.
$$y_{ij}(t)=\delta y_{ij}(t-1)+c_k y_{ik}(t-1) \epsilon_{ij}(t)+\epsilon_{ij}(t)$$

\item If $\epsilon_{ij}(t)<0$, there exists $c_k<0$, $k \neq j$, s.t.
$$y_{ij}(t)=\delta y_{ij}(t-1)+c_k y_{ik}(t-1) \epsilon_{ij}(t)+\epsilon_{ij}(t)$$
 
 \end{enumerate}
 
 Following the analysis from the previous proof, notice that
 
 $$\lim_{t \rightarrow \infty} E[y_{ij}(t)|\epsilon_{ij}(t) \geq 0]= \lim_{t \rightarrow \infty}  \left( \delta^t y_{ij}(0)+t  \delta ^t c_k y_{ik}(0) e_j+e_j \right)=e_j=E[\epsilon_{ij}(t)|\epsilon_{ij}(t) \geq 0]$$
 
 and 
 
  $$\lim_{t \rightarrow \infty} E[y_{ij}(t)|\epsilon_{ij}(t) \geq 0]= \lim_{t \rightarrow \infty}  \left( \delta^t y_{ij}(0)+t  \delta ^t c_k y_{ik}(0) v_j+v_j \right)=v_j=E[\epsilon_{ij}(t)|\epsilon_{ij}(t) < 0]$$
  
Notice the term containing the strength $c_k$ of intersectoral synergy fades as $t$ approaches infinity.  Altogether,

 \begin{align}
 \lim_{t \rightarrow \infty} E[y_{ij}(t)]=\lim_{t \rightarrow \infty} E[y_{ij}(t)|\epsilon_{ij}(t) \geq 0] Prob(\epsilon_{ij}(t) \geq 0)+\lim_{t \rightarrow \infty} E[y_{ij}(t)|\epsilon_{ij}(t) < 0] Prob(\epsilon_{ij}(t) < 0)\\
 =E[\epsilon_{ij}(t)|\epsilon_{ij}(t) \geq 0] Prob(\epsilon_{ij}(t) \geq 0)+E[\epsilon_{ij}(t)|\epsilon_{ij}(t) < 0] Prob(\epsilon_{ij}(t) < 0)\\
 =E[\epsilon_{ij}(t)]=0=\lim_{t \rightarrow \infty} E[x_{ij}(t)]
 \end{align}

 Since $\lim_{t \rightarrow \infty} E[y_{ij}(t)]=\lim_{t \rightarrow \infty} E[x_{ij}(t)]$, intersectoral synergy is not discriminatory in the long run.  Essentially, although it is discriminatory in finite intervals due to the way $y_{ij}(0)$ interacts with $\epsilon_{ij}(t)$ through $c_k$, the influence of $c_k$ fades away as $t$ approaches infinity.

\end{proof}

\Socialshort*

\begin{proof}
Let $j$ index an arbitrary inequity.  Consider a function

$$S: (X(t))_{t \geq 0} \rightarrow (Y(t))_{t \geq 0}$$

acting on an asymptotically equitable random process $(X(t))_{t \geq 0}$ in such a way to produce social multipliers.   Then by definition, there exists a collection of individuals $I$ and parameters $d_k > 0$, $k \in I$, s.t. for all $i \in I$,
 
$$y_{ij}(t)=\delta^{t-s} y_{ij}(s)+\sum_{k \neq i} d_k y_{kj}(s)+\epsilon_{ij}(t)$$

Suppose the existence of inequities $y_{kj}(s)>0$ for all $k \in I$ and $y_{ij}(s)=x_{ij}(s)$. 

It follows that

 \begin{align}
y_{ij}(t)=\delta^{t-s} y_{ij}(s)+\sum_{k \neq i} d_k y_{kj}(s)+\epsilon_{ij}(t)\\
=\delta^{t-s} x_{ij}(s)+\sum_{k \neq i} d_k y_{kj}(s)+\epsilon_{ij}(t)\\
>\delta^{t-s} x_{ij}(s)+\epsilon_{ij}(t)\\
=x_{ij}(t)
  \end{align}

 Therefore $E[y_{ij}(t)]>E[x_{ij}(t)]$. Hence $S$ is discriminatory over the interval $(s,t)$.
  
  \end{proof}

\Sociallong*

\begin{proof}
Let $j$ index an arbitrary inequity.  Consider a function

$$S: (X(t))_{t \geq 0} \rightarrow (Y(t))_{t \geq 0}$$

acting on an asymptotically equitable random process $(X(t))_{t \geq 0}$ in such a way to produce social multipliers.  Then by definition, there exists a collection of individuals $I$ and parameters $d_k > 0$, $k \in I$, s.t. for all $i \in I$,
 
$$y_{ij}(t)=\delta^{t-s} y_{ij}(s)+\sum_{k \neq i} d_k y_{kj}(s)+\epsilon_{ij}(t)$$

Suppose without loss of generality the existence of common inequities $x_{ij}(0)=y_{ij}(0)=x_0>0$ for all $i \in I$.  Suppose also without loss of generality a common social effect $d$ from each individual $k \in \{k \neq i \}$, so that $d_k=d>0$ for all $k \in \{k \neq i \}.$\\

It follows that

$$y_{ij}(t)=x_0(\delta+d)^t+\sum_{s=1}^{t-1} d^{t-s} \epsilon_{ij}(s)+\sum_{s=1}^t \delta^{t-s} \epsilon_{ij}(s)$$

Taking limiting expectations, we compute $\lim_{t \rightarrow \infty} E[x_{ij}(t)]$:

\begin{align}
 =\lim_{t \rightarrow \infty} E[\delta^{t} x_0+ \sum_{k=1}^t \delta^{t-k} \epsilon_{ij}(k)]\\
 =\lim_{t \rightarrow \infty}  \delta^t x_0\\
 =0
 \end{align}
 
 Similarly, we compute $\lim_{t \rightarrow \infty} E[y_{ij}(t)]$:
 
 \begin{align}
=\lim_{t \rightarrow \infty} x_0(\delta+d)^t+\sum_{s=1}^{t-1} d^{t-s} E[\epsilon_{ij}(s)]+\sum_{s=1}^t \delta^{t-s} E[\epsilon_{ij}(s)]\\
 =\lim_{t \rightarrow \infty} x_0(\delta+d)^t 
\end{align}

Consider three cases.  First, suppose $d<1-\delta$.  Then 

$$\lim_{t \rightarrow \infty} E[y_{ij}(t)]=\lim_{t \rightarrow \infty} x_0(\delta+d)^t=0=\lim_{t \rightarrow \infty} E[x_{ij}(t)]$$
    
 In this case, the social multiplier parameter $d$ is too weak to generate systemic discrimination in the long run.  There is a unique stable steady state in which the long run expected inequity value is $0$.   Note, however, that $E[y_{ij}(t)]=x_0(\delta+d)^t$.  The ratio of consecutive terms of the sequence is equivalent to 
 
 $$\frac{(\delta+d)^{t+1}}{(\delta+d)^t}=\delta+d.$$
 
 The ratio is closer to $1$ as $d$ increases towards $1-\delta$, so stronger social multipliers slow down the convergence rate of the sequence, inducing greater and greater persistence.\\
 
 Next, suppose $d=1-\delta$.  Then
 
 $$\lim_{t \rightarrow \infty} E[y_{ij}(t)]=\lim_{t \rightarrow \infty} x_0(\delta+d)^t=x_0>0=\lim_{t \rightarrow \infty} E[x_{ij}(t)]$$
 
 In this case, the social multiplier parameter $d$ is just strong enough to generate systemic discrimination in the long run.  There are also infinitely many steady states, which are completely determined by the initial inequity value $x_0$.\\
 
 As the social multiplier value increases above the threshold value of $1-\delta$, the system experiences a phase transition in which inequities diverge in the long run.  As the discount factor $\delta$ increases, divergence can be achieved with smaller and smaller values of $d$.  As long as $x_0 \neq 0$, there is no steady state, and the social multipliers are so strong that any initial inequity amplifies in expectation without bound across the population.  Clearly, in that case, since $E[y_{ij}(t)]$ diverges to infinity, it exceeds $0=\lim_{t \rightarrow \infty} E[x_{ij}(t)$, and the system is discriminatory in the long run.\\ 
 
 There are many specifications involving social multipliers, interactions and contagion which are nonlinear and generate multiple steady states outside of a knife-edge case (McMillon, D., 2024; McMillon, D. et al., 2014; Brock et al., 2004).  The only reason multiple steady states are only possible for a single value of $d$ here is that I have proposed a linear specification for pedagogical simplicity and clarity.  In this model, the $\delta+d=1$ case of this is analogous to a phase transition between a single stable and multiple stable steady states in nonlinear models. The divergent case is analogous to cases with multiple stable steady states in nonlinear models.  In those cases, there is a steady state that has become unstable to the right of a threshold, and trajectories travel to some newly stable steady state instead of diverging, which happens in linear models.  The qualitative policy implications are the same-that disruptive interventions that change the value of the threshold could have powerful long run effects, and that sufficiently strong interventions that rectify the initial inequities can exploit the value of the threshold for long run effects.\\
 
 The simplifications are without loss of generality, and similar intuition holds when they are relaxed.  In the case in which there are only two people, consider the system\\
 
 \begin{align}
 E[y_{1j}(t)]=\delta E[y_{1j}(t-1)+d_1 E[y_{2j}(t-1)]\\
 E[y_{2j}(t)]=\delta E[y_{2j}(t-1)+d_2 E[y_{1j}(t-1)]
 \end{align}
 
 such that $d_1 \neq d_2$.  That is, I have relaxed the assumption that the social multiplier effect is the same across all people.  This system has Jacobian
 
 \[
J = \begin{pmatrix}
\delta & d_1 \\
d_2 & \delta
\end{pmatrix}
\]
 
 with characteristic equation
 
 \[
(\delta - \lambda)^2 - d_1d_2 = 0
\]

and eigenvalues

\[
\lambda = \delta \pm \sqrt{d_1d_2}
\]

Stability at (0,0) is guaranteed when $\sqrt(d_1d_2)<1-\delta$.  The social multiplier effects appear as a product, suggesting it is the reciprocation of the peer effects that matter.  If $d_1$ is small, $d_2$ has to be that much larger to render the steady state unstable.\\

We can push this further without too much complexity by considering the case with $N$ people, who influence their peers with $d_1$ and are influenced by their peers with $d_2$.  It can be shown that in that case, the eigenvalue that guarantees stability when sufficiently small depends on a product of $d_1$ and $d_2$.  Finally, the most general case, in which every peer effect is allowed to be different, lies at the heart of network theory.  I will not rehash complex results that are outside the scope of this paper, but generally speaking, this intuition holds: the stability of the (0,0) steady state depends on the strength of social multiplier effects, and particularly on ``closed loops" of strong peer effects.  That is, person 1 influences person 2 who influences person 1, or person 1 influences person 2 who influences person 3 who influences person 1, and so on.


  \end{proof}

\Reinforcementshort*

\begin{proof}

Consider a function

$$S: (X(t))_{t \geq 0} \rightarrow (Y(t))_{t \geq 0}$$

acting on an asymptotically equitable random process $(X(t))_{t \geq 0}$ in such a way to produce reinforcement processes.  Then by definition, for some $s<t$, there exist $c_j, b_k > 0$, s.t.  


\begin{align}
y_{ij}(t) &= \delta^{t-s} y_{ij}(s)+ b_k y_{ik}(s)+\epsilon_{ij}(t)\\
y_{ik}(t) &= \delta^{t-s} y_{ik}(s)+ c_j y_{ij}(s)+\epsilon_{ik}(t)
\end{align}





Suppose without loss of generality that an injustice occurs at time $0$, inducing inequities $y_{ik}(0)=x_{ik}(0)>0$ and $y_{ij}(0)=x_{ij}(0)>0$.  Then

\begin{align}
E[y_{ik}(t)] =  \delta^{t} E[y_{ik}(0)]+c_j E[y_{ij}(0)]+E[\epsilon_{ik}(t)]\\
=  \delta^{t} E[x_{ik}(0)]+c_j E[x_{ij}(0)]\\
>\delta^{t} E[x_{ik}(0)]\\
=E[x_{ik}(t)]
  \end{align}
  
 Therefore $S$ is discriminatory over the interval $(0,t)$.  Furthermore, consider $T>t$.
 
 \begin{align}
E[y_{ij}(T)] =  \delta^{T-t} E[y_{ij}(t)]+b_k E[y_{ik}(t)]+E[\epsilon_{ij}(T)]\\
=  \delta^{T-t} \left( \delta^t E[y_{ij}(0)]+ b_k E[y_{ik}(0)] \right) +b_k \left( \delta^{t} E[y_{ik}(0)]+c_j E[y_{ij}(0)] \right)\\
=  \delta^T E[y_{ij}(0)]+ \delta^{T-t} b_k E[y_{ik}(0)] +b_k \left( \delta^{t} E[y_{ik}(0)]+c_j E[y_{ij}(0)] \right)\\
> \delta^T E[y_{ij}(0)]\\
=\delta^T E[x_{ij}(0)]\\
=E[x_{ik}(T)]
  \end{align}
 
 Therefore $S$ is discriminatory over the interval  $(t,T)$.  Altogether, $S$ is discriminatory over the interval $(0,T)$.

\end{proof}

\subsection{Additional examples of Reinforcement Processes}

There are several kinds of reinforcement processes that have been used to explain discrimination and segregation in the literature.  Suppose, for example, that negative racial attitudes drove White Americans to move away when Black Americans moved north during the early and mid-20th century (Rothstein, R., 2017).  Because this resulted in a sharp decrease in resources and the quality of infrastructure at the neighborhood level, this worsened the material conditions of Black Americans and their children (Derenoncourt, 2022), confirming negative Black stereotypes behind negative White racial attitudes.  Note that in this case, negative racial beliefs can actually impact the real world in a way that (avoidably!) confirms the beliefs.  Inequities are sustained over time because they confirm beliefs that reinforce them in the real world.  Importantly, note that the equilibrium selection from initial conditions was determined by injustices-hence these dynamics are not systemic and benign; they are systemic and unjust.

Some reinforcement processes involve beliefs that don’t change the real world, but rather, what is observed.  Consider Loury’s Taxi Cab problem (Loury, 2009).  A taxi driver has an initially negative racial belief-that Black people tend to rob taxi drivers.  As a result, she generally refuses to give rides to Black passengers.  In equilibrium, the majority of Black passengers, with no intention of robbing taxi drivers, will rationally no longer attempt to get taxis.  Instead, only those with the intention of robbing taxi drivers will do so.  The taxi driver will now only observe Black thieves.  Her beliefs are unjust and incorrect in equilibrium, but she cannot correct them because they have changed the data she gets to observe.  A more extreme version of beliefs changing what is observed occurs when beliefs stop one from making observations altogether.  Suppose initial negative beliefs about Black Americans (for example, that they are usually unpleasant to be around) lead White Americans to move their families to all-White neighborhoods, work with all-White co-workers and have only White friends.  Then no new data on Black Americans gets observed anymore, and so the erroneous beliefs will never even be given a chance to be falsified.  Once again, but in distinct ways, initial inequities are allowed to persist over time.

\newpage

\section{References}

\singlespacing

\noindent Ager, P., Boustan, L., and Eriksson, K. (2021). The intergenerational effects of a large wealth shock: white southerners after the Civil War. American Economic Review, 111(11), 3767-3794.\\

\noindent Allen, T., and Donaldson, D. (2020). Persistence and path dependence in the spatial economy (No. w28059). National Bureau of Economic Research.\\

\noindent Allison, P.D., Williams, R., and Moral-Benito, E. (2017). Maximum likelihood for cross-lagged panel models with fixed effects. Socius\\

\noindent Akhmerov, A., and Marbán, E. (2020). COVID-19 and the heart. Circulation research, 126(10), 1443-1455.\\

\noindent Arthur, W. B. (2018). The economy as an evolving complex system II. CRC Press\\

\noindent Ba, B. A., Knox, D., Mummolo, J., and Rivera, R. (2021). The role of officer race and gender in police-civilian interactions in Chicago. Science, 371(6530), 696-702\\

\noindent Bagenstos, S. R. (2014). Bottlenecks and Antidiscrimination Theory; Bottlenecks: A New Theory of Equal Opportunity by Joseph Fishkin\\

\noindent Bailey, Z. D., Krieger, N., Agénor, M., Graves, J., Linos, N., and Bassett, M. T. (2017). Structural racism and health inequities in the USA: evidence and interventions. The lancet, 389(10077), 1453-1463\\

\noindent Bolte, L., Immorlica, N., and Jackson, M. O. (2020). The role of referrals in immobility, inequality, and inefficiency in labor markets. arXiv preprint arXiv:2012.15753.\\

\noindent Brock, W. A., and Durlauf, S. N. (2001). Discrete choice with social interactions. The Review of Economic Studies, 68(2), 235-260.\\

\noindent Billings, S. B., Deming, D. J., and Ross, S. L. (2019). Partners in crime. American Economic Journal: Applied Economics, 11(1), 126-150.\\

\noindent Blume, L. E., and Durlauf, S. N. (2005). Identifying social interactions: A review. Madison: Social Systems Research Institute, University of Wisconsin.\\

\noindent Bohren, J. A., Hull, P., and Imas, A. (2022). Systemic discrimination: Theory and measurement (No. w29820). National Bureau of Economic Research.\\

\noindent Calvin, R., Winters, K., Wyatt, S. B., Williams, D. R., Henderson, F. C., and Walker, E. R. (2003). Racism and cardiovascular disease in African Americans. The American journal of the medical sciences, 325(6), 315-331\\

\noindent Charles, K. K., and Guryan, J. (2011). Studying discrimination: Fundamental challenges and recent progress. Annu. Rev. Econ., 3(1), 479-511\\

\noindent Cook, L. D. (2020). Policies to broaden participation in the innovation process. Policy Proposal, The Hamilton Project, Brookings Institution, Washington, DC.\\

\noindent Cunningham, J. P., and Lopez, J. J. (2021, May). Civil Rights Enforcement and the Racial Wage Gap. In AEA Papers and Proceedings (Vol. 111, pp. 196-200). 2014 Broadway, Suite 305, Nashville, TN 37203: American Economic Association. \\

\noindent  Darity Jr, W. A. (2022). Position and possessions: Stratification economics and intergroup inequality. Journal of Economic Literature, 60(2), 400-426.\\

\noindent Darity Jr, W. A., and Mullen, A. K. (2022). From here to equality: Reparations for Black Americans in the twenty-first century. UNC Press Books.\\

\noindent Derenoncourt, E., Kim, C. H., Kuhn, M., and Schularick, M. (2022). Wealth of two nations: The US racial wealth gap, 1860-2020 (No. w30101). National Bureau of Economic Research\\

\noindent Derenoncourt, E. (2022). Can You Move to Opportunity? Evidence from the Great Migration. The American Economic Review, 112(2), 369-408.\\

\noindent Durlauf, S. N. (2012). Complexity, economics, and public policy. Politics, Philosophy, and Economics, 11(1), 45-75.\\

\noindent Durlauf, S. N., and Ioannides, Y. M. (2010). Social interactions. Annu. Rev. Econ., 2(1), 451-478.\\

\noindent Feagin, J. (2013). Systemic racism: A theory of oppression. Routledge\\

\noindent Fincher, C., Williams, J. E., MacLean, V., Allison, J. J., Kiefe, C. I., and Canto, J. (2004). Racial disparities in coronary heart disease. Ethnicity and disease, 14(3), 360-371\\

\noindent Fryer Jr, R. G., and Loury, G. C. (2005). Affirmative action and its mythology. Journal of Economic Perspectives, 19(3), 147-162.\\

\noindent Glaeser, E. L., Sacerdote, B. I., and Scheinkman, J. A. (2003). The Social Multiplier. Journal of the European Economic Association, 1(2-3), 345-353.\\

\noindent Guryan, J., and Charles, K. K. (2013). Taste‐based or statistical discrimination: the economics of discrimination returns to its roots. The Economic Journal, 123(572), F417-F432.\\

\noindent Hamilton, D., and Darity Jr, W. (2010). Can ‘baby bonds’ eliminate the racial wealth gap in putative post-racial America?. The Review of Black Political Economy, 37(3-4), 207-216.\\

\noindent Jackson, J. W. (2021). Meaningful causal decompositions in health equity research: definition, identification, and estimation through a weighting framework. Epidemiology (Cambridge, Mass.), 32(2), 282.\\

\noindent Johnson, R. C., and Jackson, C. K. (2019). Reducing inequality through dynamic complementarity: Evidence from Head Start and public school spending. American Economic Journal: Economic Policy, 11(4), 310-349\\

\noindent Kurtulus, F. A. (2016). The impact of affirmative action on the employment of minorities and women: a longitudinal analysis using three decades of EEO‐1 filings. Journal of Policy Analysis and Management, 35(1), 34-66.\\

\noindent Lang, K., and Spitzer, A. K. L. (2020). Race discrimination: An economic perspective. Journal of Economic Perspectives, 34(2), 68-89.\\

\noindent Loury, G. C. (2009). The anatomy of racial inequality. Harvard University Press.\\

\noindent McMillon, D. (2024). The Self-Reinforcing Effects of Temporary Interventions: Systems Thinking to Reduce Systemic Disadvantage. Available at SSRN\\

\noindent McMillon, D., Simon, C. P., and Morenoff, J. (2014). Modeling the underlying dynamics of the spread of crime. PloS one, 9(4), e88923.\\

\noindent Meschede, T., Darity Jr, W., and Hamilton, D. (2015). Financial resources in kinship and social networks: Flow and relationship to household wealth by race and ethnicity among Boston residents. Federal Reserve Bank of Boston Community Development Discussion Paper, (2015-02).\\

\noindent Miller, C. (2017). The persistent effect of temporary affirmative action. American Economic Journal: Applied Economics, 9(3), 152-190.\\

\noindent Miller, M. C. (2020). “The Righteous and Reasonable Ambition to Become a Landholder”: Land and Racial Inequality in the Postbellum South. Review of Economics and Statistics, 102(2), 381-394\\

\noindent Kline, P., Rose, E. K., and Walters, C. R. (2022). Systemic discrimination among large US employers. The Quarterly Journal of Economics, 137(4), 1963-2036\\

\noindent O'Brien, R. L. (2012). Depleting capital? Race, wealth and informal financial assistance. Social Forces, 91(2), 375-396

\noindent Patton, D. U., Hong, J. S., Williams, A. B., and Allen-Meares, P. (2013). A review of research on school bullying among African American youth: an ecological systems analysis. Educational Psychology Review, 25, 245-260.\\

\noindent Pfeffer, F. T., and Killewald, A. (2015). How rigid is the wealth structure? intergenerational correlations of family wealth. Population Studies Center, University of Michigan.\\

\noindent Phelan, J. C., Link, B. G., and Tehranifar, P. (2010). Social conditions as fundamental causes of health inequalities: theory, evidence, and policy implications. Journal of health and social behavior, 51, S28-S40\\

\noindent Phelan, J. C., and Link, B. G. (2015). Is racism a fundamental cause of inequalities in health?. Annual Review of Sociology, 41, 311-330.

\noindent Powell, J. A. (2007). Structural racism: building upon the insights of John Calmore. NCL Rev., 86, 791.\\

\noindent Reskin, B. (2012). The race discrimination system. Annual review of sociology, 38, 17-35.\\

\noindent Rothstein, R. (2017). The color of law: A forgotten history of how our government segregated America. Liveright Publishing\\

\noindent Roithmayr, D. (2014). Reproducing racism. In \textit{Reproducing Racism}. New York University Press.\\

\noindent Roithmayr, D. (2016). The dynamics of excessive force. U. Chi. Legal F., 407.\\

\noindent Schell, C. J., Dyson, K., Fuentes, T. L., Des Roches, S., Harris, N. C., Miller, D. S., ... and Lambert, M. R. (2020). The ecological and evolutionary consequences of systemic racism in urban environments. Science, 369(6510), eaay4497.\\

\noindent Sen, M., and Wasow, O. (2016). Race as a bundle of sticks: Designs that estimate effects of seemingly immutable characteristics. Annual Review of Political Science, 19, 499-522.\\

\noindent Spencer, S. J., Logel, C., and Davies, P. G. (2016). Stereotype threat. Annual review of psychology, 67, 415-437.\\

\noindent Wadhera, R. K., Figueroa, J. F., Rodriguez, F., Liu, M., Tian, W., Kazi, D. S., ... and Joynt Maddox, K. E. (2021).  Racial and ethnic disparities in heart and cerebrovascular disease deaths during the COVID-19 pandemic in the United States. Circulation, 143(24), 2346-2354.\\

\noindent Whitehead, M. and Dahlgren, G. (2006). Concepts and principles for tackling social inequities in health: Leveling up Part 1. World Health Organization: Studies on social and economic determinants of population health, 2, 460-474.\\

\noindent Wodtke, G.T. (2018). Regression-based adjustment for time-varying confounders.
Sociological Methods and Research\\

\noindent Zivin, J. S. G., and Singer, G. (2023). Disparities in pollution capitalization rates: The role of direct and systemic discrimination (No. w30814). National Bureau of Economic Research.

\end{document}